
\documentclass{amsart}%
\usepackage{amsfonts}
\usepackage{amsmath}
\usepackage{amssymb}
\usepackage{color}
\usepackage[polish,english]{babel}
\usepackage[utf8]{inputenc}
\usepackage[T1]{fontenc}
\usepackage{graphicx}%
\setcounter{MaxMatrixCols}{30}
%TCIDATA{OutputFilter=latex2.dll}
%TCIDATA{Version=5.50.0.2953}
%TCIDATA{LastRevised=Saturday, November 18, 2017 23:20:00}
%TCIDATA{<META NAME="GraphicsSave" CONTENT="32">}
%TCIDATA{<META NAME="SaveForMode" CONTENT="1">}
%TCIDATA{BibliographyScheme=Manual}
%TCIDATA{Language=American English}
%BeginMSIPreambleData
\providecommand{\U}[1]{\protect\rule{.1in}{.1in}}
%EndMSIPreambleData
\theoremstyle{plain}
\newtheorem{theorem}{Theorem}[section]
\newtheorem{lemma}[theorem]{Lemma}
\newtheorem{proposition}[theorem]{Proposition}
\theoremstyle{definition}

\theoremstyle{remark}
\newtheorem{remark}[theorem]{Remark}
\numberwithin{equation}{section}

\sloppy
\textheight=24cm
\textwidth=15cm
\oddsidemargin=-0.1cm
\topmargin=-1.0cm
\begin{document}
\title{ THE INTEGRABLE HEAVENLY TYPE EQUATIONS AND THEIR\\LIE-ALGEBRAIC STRUCTURE}
\author{Oksana E. Hentosh}
\address{The Institute for Applied Problems of Mechanics and Mathematics at the NAS,
Lviv, 79060 Ukraine}
\email{ohen@ukr.net}
\author{Yarema A. Prykarpatsky}
\address{the Department of Applied Mathematics at the University of Agriculture in
Krakow, 30059, Poland}
\email{yarpry@gmail.com}
\author{Denis Blackmore}
\address{Department of Mathematical Sciences at NJIT, University Heights, Newark, NJ
07102 USA}
\email{denblac@gmail.com }
\author{Anatolij K. Prykarpatski }
\address{The Department of Applied Mathematics at AGH University of Science and
Technology, Krakow 30059, Poland}
\email{pryk.anat@cybergal.com}
\keywords{Lax--Sato equations, heavenly equations, Lax integrability, Hamiltonian
system, torus diffeomorphisms, loop Lie algebra, Lie-algebraic scheme, Casimir
invariants, R-structure, Lie-Poisson structure, bi-Hamiltonicity,
Lagrange--d'Alembert principle}
\subjclass{17B68,17B80,35Q53, 35G25, 35N10, 37K35, 58J70,58J72, 34A34,37K05,37K10}
\maketitle

\begin{abstract}
The work is devoted to old and recent investigations of the classical M. A.
Buhl problem of describing compatible linear vector field equations, its
general M.G. Pfeiffer and modern Lax-Sato type special solutions. Eespecially
we analyze the the related Lie-algebraic structures and integrability
properties of a very interesting class of nonlinear dynamical systems called
the dispersionless heavenly type equations, which were initiated by
Pleba\'{n}ski and later analyzed in a series of articles. The AKS-algebraic
and related $\mathcal{R}$-structure schemes are used to study the orbits of
the corresponding co-adjoint actions, which are intimately related to the
classical Lie--Poisson structures on them. It is demonstrated that their
compatibility condition coincides with the corresponding heavenly type
equations under consideration. It is shown that all these equations originate
in this way and can be represented as a Lax compatibility condition for
specially constructed loop vector fields on the torus. The infinite hierarchy
of conservations laws related to the heavenly equations is described, and its
analytical structure connected with the Casimir invariants, is mentioned. In
addition, typical examples of such equations, demonstrating in detail their
integrability via the scheme devised herein, are presented. The relationship
of the very interesting Lagrange--d'Alembert type mechanical interpretation of
the devised integrability scheme with the Lax--Sato equations is also discussed.

\end{abstract}

\section{\label{Sec_1}Introduction\ }

In the classical works \cite{Buhl-1,Buhl-2,Buhl-3} still in 1928 the French
mathematician M.A. Buhl posed the problem of classifying all infinitesimal
symmetries of a given linear vector field equation
\begin{equation}
A\psi=0,\text{ \ \ } \label{V0aa}%
\end{equation}
where $\ \ $\ function $\psi$ $\in C^{2}(\mathbb{R}^{n};\mathbb{R}),\ $and%

\begin{equation}
A:=%
%TCIMACRO{\dsum \limits_{j=\overline{1,n}}}%
%BeginExpansion
{\displaystyle\sum\limits_{j=\overline{1,n}}}
%EndExpansion
a_{j}(x)\frac{\partial}{\partial x_{j}}\ \label{V0a}%
\end{equation}
is a vector field operator on $\mathbb{R}^{n}$ with coefficients $a_{j}\in
C^{1}(\mathbb{R}^{n};\mathbb{R}),j=\overline{1,n}.$ It is easy to show that
the problem under regard is reduced \cite{Pfei-4} to describing all possible
vector fields
\begin{equation}
A^{(k)}:=%
%TCIMACRO{\dsum \limits_{j=\overline{1,n}}}%
%BeginExpansion
{\displaystyle\sum\limits_{j=\overline{1,n}}}
%EndExpansion
a_{j}^{(k)}(x)\frac{\partial\ }{\partial x_{j}} \label{V0b}%
\end{equation}
with coefficients $a_{j}^{(k)}\in C^{1}(\mathbb{R}^{n};\mathbb{R}%
),j,k=\overline{1,n},\ $ satisfying the Lax type commutator condition%
\begin{equation}
\lbrack A,A^{(k)}]=0 \label{V0c}%
\end{equation}
for all $x\in\mathbb{R}^{n}$ and $k=\overline{1,n}.$ \ The M.A. Buhl problem
above was completely solved in 1931 by the Ukrainian mathematician G. Pfeiffer
in the works \cite{Pfei-1,Pfei-2,Pfei-3,Pfei-4,Pfei-5,Pfei-6}, where he has
constructed explicitly the searched set of independent vector fields
\ (\ref{V0b}), \ having made use effectively of the full set of invariants for
the vector field \ (\ref{V0a}) and the related solution set structure of the
Jacobi-Mayer system of equations, naturally following from \ (\ref{V0c}).
\ Some results, yet not complete, were also obtained by C. Popovici \ in
\cite{Popo}.

Some years ago the M.A. Buhl type equivalent problem was\ independently
reanalyzed once more by Japanese mathematicians K. Takasaki and T. Takebe
\cite{TaTa-1,TaTa-2} and later by L.V. Bogdanov, V.S. Dryuma and S.V. Manakov
\cite{BoDrMa} \ for a very special case when the vector field operator
\ (\ref{V0a}) depends analytically on a "spectral" parameter $\lambda
\in\mathbb{C}:$
\begin{equation}
A:=\frac{\partial}{\partial t}+%
%TCIMACRO{\dsum \limits_{j=\overline{1,n}}}%
%BeginExpansion
{\displaystyle\sum\limits_{j=\overline{1,n}}}
%EndExpansion
a_{j}(t,x;\lambda)\frac{\partial}{\partial x_{j}}+a_{0}(t,x;\lambda
)\frac{\partial}{\partial\lambda}. \label{V0d}%
\end{equation}
Based on the before developed Sato theory \cite{Sato-1,Sato-2}, the authors
mentioned above \ have shown for some special kinds of vector fields
\ (\ref{V0d}) that there exists an infinite hierarchy of the symmetry vector
fields
\begin{equation}
A^{(k)}:=\frac{\partial}{\partial\tau_{k}}+%
%TCIMACRO{\dsum \limits_{j=\overline{1,n}}}%
%BeginExpansion
{\displaystyle\sum\limits_{j=\overline{1,n}}}
%EndExpansion
a_{j}^{(k)}(\tau,x;\lambda)\frac{\partial\ }{\partial x_{j}}+a_{0}^{(k)}%
(\tau,x;\lambda)\frac{\partial}{\partial\lambda},\ \label{V0e}%
\end{equation}
where $\tau=(t;\tau_{1},\tau_{2},...)\in\mathbb{R}^{\mathbb{Z}_{+}}%
,k\in\mathbb{Z}_{+},$ satisfying the Lax-Sato type compatible commutator
conditions
\begin{equation}
\lbrack A,A^{(k)}]=0=[A^{(j)},A^{(k)}] \label{V0f}%
\end{equation}
for all $k,j\in\mathbb{Z}_{+}.$ Moreover, in the cases under regard, the
compatibility conditions \ (\ref{V0f}) proved to be equivalent to some very
important for applications heavenly type dispersionless equations in partial derivatives.

In the present work we investigate the Lax--Sato compatible systems, the
related Lie-algebraic structures and complete integrability properties of an
interesting class of nonlinear dynamical systems called the heavenly type
equations, which were introduced by Pleba\'{n}ski \cite{Pleb} and analyzed in
a series of articles
\cite{MaSa,BoDrMa,Ovsi-1,Ovsi-2,Pavl,Schi,Schi-1,TaTa-1,TaTa-2}. In our work,
having employed the AKS-algebraic and related $\mathcal{R}$-structure schemes
\cite{Blas,BlPrSa,BlSz,FaTa,ReSe,PrSa}, applied to the \ holomorphic loop Lie
algebra \ $\mathcal{\tilde{G}}:=\widetilde{diff}(\mathbb{T}^{n})$ of vector
fields on torus $\mathbb{T}^{n},n\in\mathbb{Z}_{+},$ the orbits of the
corresponding coadjoint actions on $\mathcal{\tilde{G}}^{\ast}$, closely
related to the classical Lie--Poisson type structures, were reanalyzed and
studied in detail. By constructing two commuting flows on the coadjoint space
$\mathcal{\tilde{G}}^{\ast},$ generated by a chosen root element $\tilde{l}%
\in\mathcal{\tilde{G}}^{\ast}$ and some Casimir invariants, we have
successively demonstrated that their compatibility condition coincides exactly
with the corresponding heavenly equations under consideration.

As a by-product of the construction, devised in the work \cite{PrPr} , we
prove that all the heavenly equations have a similar origin and can be
represented as a Lax compatibility condition for special loop vector fields on
the torus $\mathbb{T}^{n}.$ We analyze the structure of the infinite hierarchy
of conservations laws, related to the heavenly equations, and demonstrate
their analytical structure connected with the Casimir invariants is generated
by the Lie--Poisson structure on $\mathcal{\tilde{G}}^{\ast}.$ Moreover, we
have extended the initial Lie-algebraic structure for the case when the the
basic Lie algebra $\mathcal{\tilde{G}=}$ $\widetilde{diff}(\mathbb{T}^{n})$ is
replaced by the adjacent holomorphic Lie algebra $\mathcal{\bar{G}%
}:=diff_{hol}(\ \mathbb{C\times T}^{n})\subset diff(\ \mathbb{C\times T}%
^{n})\ $of vector fields on $\ \mathbb{C\times T}^{n}$. Typical examples are
presented for all cases of the heavenly equations and it is shown in detail
and their integrability is demonstrated using the scheme devised here. This
scheme makes it possible to construct a very natural derivation of well known
Lax--Sato representation for an infinite hierarchy\ of heavenly equations,
related to the canonical Lie--Poisson structure on the adjoint space
$\mathcal{\bar{G}}^{\ast}.$ We also briefly discuss the Lagrangian
representation of these equations following from their Hamiltonicity with
respect to both intimately related commuting evolutionary flows, and the
related bi-Hamiltonian structure as well as the B\"{a}cklund transformations.
As a matter of fact, there are only a few examples of multi-dimensional
integrable systems for which such a detailed description of their mathematical
structure has been given. As was aptly mentioned in \cite{ShMaYa}, the
heavenly equations comprise an important class of such integrable \ systems.
This is due in part to the fact that some of them are obtained by a reduction
of the Einstein equations with Euclidean (and neutral) signature for (anti-)
self-dual gravity, which includes the theory of gravitational
instantons.\ This and other cases of important applications of
multi-dimensional integrable equations strongly motivated us to\ study this
class of equations and the related mathematical structures. As a very
interesting aspect of our approach to describing integrability of the heavenly
dynamical systems, there is a very interesting Lagrange--d'Alembert type
mechanical interpretation. We need to underline here that the main motivating
idea behind this work was based both on the paper by Kulish \cite{Kuli},
devoted to studying the super-conformal Korteweg--de-Vries equation as an
integrable Hamiltonian flow on the adjoint space to the holomorphic loop Lie
superalgebra of super-conformal vector fields on the circle, and on the
insightful investigation by Mikhalev \cite{Mikh}, which studied Hamiltonian
structures on the adjoint space to the holomorphic loop Lie algebra of smooth
vector fields on the circle. We were also impressed by deep technical results
\cite{TaTa-1,TaTa-2} of Takasaki and Takebe, who fully realized the vector
field scheme of the Lax--Sato theory. Additionally, we were strongly
influenced both by the works of Pavlov, Bogdanov, Dryuma, Konopelchenko and
Manakov \cite{BoPa,BoDrMa,BoKo,Kono}, as well as by the work of Ferapontov and
Moss \cite{FeMo}, in which they devised new effective differential-geometric
and analytical methods for studying an integrable degenerate multi-dimensional
dispersionless heavenly type hierarchy of equations, the mathematical
importance of which is still far from being properly appreciated. Concerning
other Lie-algebraic approaches to constructing integrable heavenly equations,
we mention work by Szablikowski and Sergyeyev \cite{Szab,SeSz}, \ Ovsienko
\cite{Ovsi-1,Ovsi-2} and by Kruglikov and Morozov \cite{KrMo}. \ 

\section{The Lax--Sato type compatible systems of linear vector field
equations}

\subsection{A vector field on the torus and its invariants}

Consider a simple vector field $X:\mathbb{R\times}\mathbb{T}^{n}\rightarrow
T(\mathbb{R\times T}^{n})$ on the $(n+1)$-dimensional toroidal cylinder
$\mathbb{R\times T}^{n}$ \ for arbitrary $n\in\mathbb{Z}_{+},$ which we will
write in the slightly special form
\begin{equation}
A=\frac{\partial}{\partial t}+<a(t,x),\frac{\partial}{\partial x}>, \label{V1}%
\end{equation}
where $(t,x)\in\mathbb{R\times T}^{n},a(t,x)\in\mathbb{E}^{n},\frac{\partial
}{\partial x}:=(\frac{\partial}{\partial x_{1}},\frac{\partial}{\partial
x_{2}},...,\frac{\partial}{\partial x_{n}})^{\intercal}$ and $\ <\cdot,\cdot>$
\ \ is the standard scalar product on the Euclidean space $\mathbb{E}^{n}.$
With the vector field \ (\ref{V1}), one can associate the linear equation
\begin{equation}
A\psi=0 \label{V2}%
\end{equation}
for some function $\psi\in C^{2}(\mathbb{R\times T}^{n};\mathbb{R}),$ which we
will call an \textquotedblleft invariant\textquotedblright\ of the vector field.

Next, we study the existence and number of such functionally-independent
invariants to the equation (\ref{V2}). For this let us pose the following
Cauchy problem for equation (\ref{V2}): Find a function $\psi\in
C^{2}(\mathbb{R\times T}^{n};\mathbb{R}),$ which at point $\ t^{(0)}%
\in\mathbb{R\ }$ \ satisfies the condition $\psi(t,x)|_{t=t^{(0)}}=\psi
^{(0)}(x),$ $x$ $\in\mathbb{T}^{n},$ for a given function $\psi^{(0)}\in
C^{2}(\mathbb{T}^{n};\mathbb{R}).$ For the equation (\ref{V2}) there is a
naturally related parametric vector field on the torus $\mathbb{T}^{n}\ $\ in
the form of the ordinary vector differential equation%
\begin{equation}
\frac{dx}{dt}=a(t,x), \label{V3}%
\end{equation}
to which there corresponds the following Cauchy problem: find a function
$x:\mathbb{R}\rightarrow\mathbb{T}^{n}$ satisfying
\begin{equation}
x(t)|_{t=t^{(0)}}=z\ \label{V4}%
\end{equation}
for an arbitrary \ constant vector $z\in\mathbb{T}^{n}.$ Assuming that the
vector-function $a\in C^{1}(\mathbb{R\times T}^{n};\mathbb{E}^{n}),$ it
follows from the classical Cauchy theorem \cite{Cart} on the existence and
unicity of the solution to \ (\ref{V3}) and \ (\ref{V4}) that we can obtain a
unique solution to the vector equation \ (\ref{V3}) as some function
$\ \Phi\in C^{1}(\mathbb{R}\times\mathbb{T}^{n};\mathbb{T}^{n}),\ x=\Phi
(t,z),$ such that the matrix $\ \partial\Phi(t,z)/\partial z$ is nondegenerate
for all $t\in\mathbb{R}$ sufficiently close to $t^{(0)}\in\mathbb{R}.$ Hence,
the Implicit Function Theorem \cite{Cart,CoLe} implies that there exists a
mapping $\Psi:\mathbb{R}\times\mathbb{T}^{n}\rightarrow\mathbb{T}^{n},$ such
that
\begin{equation}
\Psi(t,x)=z \label{V5}%
\end{equation}
for every $z\in\mathbb{T}^{n}$ and all $t\in\mathbb{R}$ sufficiently enough to
$t^{(0)}\in\mathbb{R}.$ Supposing now that the functional vector
$\Psi(t,x)=(\psi^{(1)}(t,x),\psi^{(2)}(t,x),...,\psi^{(n)}(t,x))^{\intercal},$
$(t,x)\in\mathbb{R}\times\mathbb{T}^{n},$\ is constructed, from the
arbitrariness of the parameter $z\in\mathbb{T}^{n}$ one can deduce that all
functions $\psi^{(j)}:\mathbb{R}\times\mathbb{T}^{n}\rightarrow\mathbb{T}%
^{1},$ $j=\overline{1,n},$ are functionally independent invariants of the
vector field equation \ (\ref{V2}), that is $A\psi^{(j)}=0,j=\overline{1,n}.$
Thus, the vector field equation (\ref{V2}) has exactly $\ n\in\mathbb{Z}_{+}$
functionally independent invariants, which make it possible, in particular, to
solve the Cauchy problem posed above. \ Namely, let \ a mapping $\alpha
:\mathbb{T}^{n}\rightarrow\mathbb{R}$ be chosen such that $\alpha
(\Psi(t,x))|_{t=t^{(0)}}=\psi^{(0)}(x)\ $ for all $\ x$ $\in\mathbb{T}^{n}$
and a fixed $t^{(0)}\in\mathbb{R}.$ Inasmuch as the superposition of functions
$\alpha\circ\Psi:\mathbb{R}\times\mathbb{T}^{n}\rightarrow\mathbb{T}^{1}$
\ is, evidently, also an invariant for the equation (\ref{V2}), it provides
the solution to this Cauchy problem, which we can formulate as the following result.

\begin{proposition}
The linear equation (\ref{V2}), generated by the vector field \ (\ref{V3}) on
the torus $\mathbb{T}^{n},$ has exactly $n\in\mathbb{Z}_{+}$ functionally
independent invariants.
\end{proposition}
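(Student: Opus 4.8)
The plan is to establish the claim by the classical method of characteristics, which in fact already underlies the construction sketched above. The statement has two halves --- a lower bound (there exist at least $n$ functionally independent invariants) and a matching upper bound (there are no more than $n$) --- and I would treat them separately.

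For the lower bound I would make the construction of $\Psi$ precise. Starting from the characteristic system $\frac{dx}{dt}=a(t,x)$ with data $x(t^{(0)})=z$, the Cauchy theorem furnishes the flow $x=\Phi(t,z)$ with $\partial\Phi(t,z)/\partial z$ nonsingular for $t$ near $t^{(0)}$ (at $t=t^{(0)}$ it is the identity, so nondegeneracy persists by continuity). Inverting via the Implicit Function Theorem yields $\Psi(t,\cdot)=\Phi(t,\cdot)^{-1}$, and I would check that each component $\psi^{(j)}$ satisfies $A\psi^{(j)}=0$ by differentiating the identity $\psi^{(j)}(t,\Phi(t,z))=z_{j}$ in $t$: the left-hand side equals $(A\psi^{(j)})(t,\Phi(t,z))$, while the right-hand side is constant, forcing $A\psi^{(j)}=0$ along every trajectory, hence everywhere on the relevant neighborhood. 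Functional independence of the $\psi^{(j)},\ j=\overline{1,n},$ then follows at once from the nonsingularity of the Jacobian $\partial\Psi/\partial x=(\partial\Phi/\partial z)^{-1}$, which guarantees that the spatial gradients $\{\partial_{x}\psi^{(j)}\}$ are linearly independent.

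For the upper bound, which I expect to be the more delicate half, the idea is to show that every invariant is a function of these $n$. If $A\psi=0$, then $t\mapsto\psi(t,\Phi(t,z))$ has vanishing derivative, so $\psi$ is constant along each characteristic; since the characteristics crossing the slice $\{t=t^{(0)}\}$ are labeled bijectively by $z=\Psi(t,x)\in\mathbb{T}^{n}$, one obtains $\psi(t,x)=F(\Psi(t,x))$ for a suitable function $F$. Thus any collection of invariants factors through the single $\mathbb{T}^{n}$-valued map $\Psi$, so no more than $n$ of them can be functionally independent. Combining the two bounds gives exactly $n$.

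The main obstacle is precisely the passage $\psi(t,x)=F(\Psi(t,x))$: it requires that the level sets of $\Psi$ coincide with the characteristic curves and that the flow is a genuine diffeomorphism on the region in question. This is only local in $t$ (near $t^{(0)}$), and some care is needed because the spatial variable lives on the compact torus $\mathbb{T}^{n}$ while $\Psi$ is itself $\mathbb{T}^{n}$-valued; I would restrict to a neighborhood of $t^{(0)}$ on which $\Phi(t,\cdot)$ is a diffeomorphism, which is exactly the regime in which the preceding existence-and-uniqueness discussion is valid, and note that the counting of functionally independent invariants is a local (pointwise-rank) statement, so this restriction is harmless.
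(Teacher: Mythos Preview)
Your proposal is correct and follows the same method-of-characteristics construction as the paper: build the flow $\Phi$, invert it to obtain $\Psi$, and take the $n$ invariants to be the components of $\Psi$. If anything, your treatment is more complete than the paper's, which records the proposition immediately after this construction without spelling out the upper-bound argument (that every invariant factors through $\Psi$) that you supply.
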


Consider now a Plucker type \cite{Kono} differential form $\chi^{(n)}%
\in\Lambda^{n}(\mathbb{T}^{n})$ on the torus $\mathbb{T}^{n}$ \ as
\begin{equation}
\chi^{(n)}:=\mathrm{d}\psi^{(1)}\wedge\mathrm{d}\psi^{(2)}\wedge
...\wedge\mathrm{d}\psi^{(n)}, \label{V6}%
\end{equation}
generated by the vector $\Psi:\mathbb{R}^{n}\times\mathbb{T}^{n}%
\rightarrow\mathbb{T}^{n}$ of independent invariants \ \ (\ref{V5}), depending
additionally on $n\in\mathbb{Z}_{+}$ parameters $t\in\mathbb{R}^{n},$ where by
definition, \ \ for $\ $any $\ k=\overline{1,n}$
\begin{equation}
\mathrm{d}\psi^{(k)}:=\sum_{j=\overline{1,n}}\frac{\partial\psi^{(k)}%
}{\partial x_{j}}dx_{j}\ \label{V6a}%
\end{equation}
on the manifold $\mathbb{T}^{n}.$ \ As follows from the Frobenius theorem
\cite{Arno,Cart,Godb,Kono}, the Plucker type differential form (\ref{V6}) is
for all fixed parameters $t\in\mathbb{R}^{n}$ nonzero on the manifold
$\mathbb{T}^{n}$ owing to the functional independence of the invariants
\ (\ref{V5}). It is easy to see that at the fixed parameters $t\in
\mathbb{R}^{n}$ the following \cite{Pfei-6} Jacobi-Mayer type relationship%
\begin{equation}
\left\vert \frac{\partial\Psi}{\partial x}\right\vert ^{-1}\mathrm{d}%
\psi^{(1)}\wedge\mathrm{d}\psi^{(2)}\wedge...\wedge\mathrm{d}\psi^{(n)}%
=dx_{1}\wedge dx_{2}\wedge...\wedge dx_{n}\ \label{V7}%
\end{equation}
holds for $\ k=\overline{1,n}$ \ on the manifold $\mathbb{T}^{n},$ where
$\ \left\vert \frac{\partial\Psi}{\partial x}\right\vert $ is the determinant
of the Jacobi mapping $\frac{\partial\Psi}{\partial x}:T(\mathbb{T}%
^{n})\rightarrow T(\mathbb{T}^{n})$ of the mapping \ (\ref{V5}) subject to the
torus variables $\ x\in\mathbb{T}^{n}.$ On the right-hand side of \ (\ref{V7})
one has the volume measure on the torus $\mathbb{T}^{n},$ which is naturally
dependent on $t\in\mathbb{R}^{n}$ owing to the general vector field
relationships \ (\ref{V3}). \ Taking into account that for all $k=\overline
{1,n}$ \ the full differentials $\ $
\begin{equation}
\ d\psi^{(k)}=\sum_{s=\overline{1,n}}\frac{\partial\psi^{(k)}}{\partial t_{s}%
}dt_{s}+\mathrm{d}\psi^{(k)}\ \ \label{V7a}%
\end{equation}
vanish on $\mathbb{R}^{n}\times\mathbb{T}^{n},$ \ the corresponding
substitution of the reduced differentials $\mathrm{d}\psi^{(k)}\in
C^{2}(\mathbb{R}^{n};\Lambda^{1}(\mathbb{T}^{n})),k=\overline{1,n},$ \ into
\ (\ref{V7}) easily gives rise, in particular, to the following set of the
compatible vector field relationships
\begin{equation}
\ \frac{\partial\Psi}{\partial t_{s}}-\ \sum_{j,k=\overline{1,n}}\left[
\left(  \frac{\partial\Psi}{\partial x}\right)  _{jk}^{-1}\frac{\partial
\psi^{(k)}}{\partial t_{s}}\right]  \frac{\partial\Psi}{\partial x_{j}%
}=0,\ \label{V7b}%
\end{equation}
for all $s=\overline{1,n}.$ \ The latter property, as it was demonstrated by
M.G. Pfeiffer in \cite{Pfei-6}, makes it possible to solve effectively the
M.A. Buhl problem and has interesting applications \cite{BoDrMa,Kono} in the
theory of completely integrable dynamical systems of heavenly type, which are
considered in the next section.

\subsection{Vector field hierarchies on the torus with \textquotedblleft
spectral\textquotedblright\ parameter and the Lax--Sato integrable heavenly
dynamical systems}

Consider some naturally ordered infinite set \ of parametric vector fields
\ (\ref{V1}) on the torus $\mathbb{T}^{n}$ in the form
\begin{equation}
A^{(k)}=\frac{\partial}{\partial t_{k}}+<a^{(k)}\ (t,x;\lambda),\frac
{\partial}{\partial x}>+\text{ }a_{0}^{(k)}(t,x;\lambda)\frac{\partial
}{\partial\lambda}:=\frac{\partial}{\partial t_{k}}+\mathrm{A}^{(k)},
\label{V8}%
\end{equation}
where $t_{k}\in\mathbb{R},k\in\mathbb{Z}_{+},(t,x;\lambda)\in(\mathbb{R}%
^{\mathbb{Z}_{+}}\times\mathbb{T}^{n})\times\mathbb{C}\ \ $are the evolution
parameters, and the $\ $dependence of smooth vectors $(a_{0}^{(k)}%
,a^{(k)})^{\intercal}\in\mathbb{E}\times\mathbb{E}^{n\ },k\in\mathbb{Z}_{+},$
on the \textit{\textquotedblleft spectral\textquotedblright} parameter
$\lambda\in\mathbb{C}$ is assumed to be holomorphic. Suppose now that the
infinite hierarchy of the linear equations
\begin{equation}
A^{(k)}\psi=0 \label{V9}%
\end{equation}
for $k\in\mathbb{Z}_{+}$ possesses exactly $n+1\in\mathbb{Z}_{+}$ common
functionally independent invariants $\psi^{(j)}(\lambda)\in$ $C^{2}%
(\mathbb{R}^{\mathbb{Z}_{+}}\times\mathbb{T}^{n};\mathbb{C}),$ $j=\overline
{0,n},$ for $\lambda\in\mathbb{C}.$ Then, owing to the existence theory
\cite{Cart,CoLe,Golu} for ordinary differential equations depending on the
\textit{\textquotedblleft spectral\textquotedblright} parameter $\lambda
\in\mathbb{C},$ the solutions may be assumed to be such that allow analytical
continuation in the parameter $\lambda\in\mathbb{C}$ both inside
$\mathbb{S}_{+}^{1}\subset\mathbb{C}$ \ of some circle $\mathbb{S}^{1}%
\subset\mathbb{C}$ and subject to the parameter $\lambda^{-1}\in
\mathbb{C},|\lambda|\rightarrow\infty,$ outside $\mathbb{S}_{-}^{1}%
\subset\mathbb{C}$ of this circle $\mathbb{S}^{1}\subset\mathbb{C}.$ This
means that as $|\lambda|\rightarrow\infty$ we have the following
expansions:$\ $%
\begin{align}
\psi^{(0)}(\lambda)  &  \sim\lambda+\ \sum_{k=0}^{\infty}\psi_{k}%
^{(0)}(t,x)\lambda^{-k},\nonumber\\
& \nonumber\\
\psi^{(1)}(\lambda)  &  \sim\sum_{k=0}^{\infty}\tau_{k}^{(1)}(t,x)\psi
_{0}(\lambda)^{k}+\sum_{k=1}^{\infty}\psi_{k}^{(1)}(t,x)\psi_{0}(\lambda
)^{-k},\label{V10}\\
& \nonumber\\
\psi^{(2)}(\lambda)  &  \sim\sum_{k=0}^{\infty}\tau_{k}^{(2)}(t,x)\psi
_{0}(\lambda)^{k}+\sum_{k=1}^{\infty}\psi_{k}^{(2)}(t,x)\psi_{0}(\lambda
)^{-k},\nonumber\\
&  ...\nonumber\\
\psi^{(n)}(\lambda)  &  \sim\sum_{k=0}^{\infty}\tau_{k}^{(n)}(t,x)\psi
_{0}(\lambda)^{k}+\sum_{k=1}^{\infty}\psi_{k}^{(n)}(t,x)\psi_{0}(\lambda
)^{-k},\nonumber
\end{align}
where we took into account that $\psi^{_{(0)}}(\lambda)\in C^{2}%
(\mathbb{R}^{\mathbb{Z}_{+}}\times\mathbb{T}^{n};\mathbb{C}),$ $\lambda
\in\mathbb{S}_{-}^{1}\subset\mathbb{C},$ is the basic invariant solution to
the equations (\ref{V9}), \ functions $\ $ $\tau_{l}^{(s)}\in C^{2}%
(\mathbb{R}^{\mathbb{Z}_{+}}\times\mathbb{T}^{n};\mathbb{C})\ $for all
$s=\overline{1,n\text{ }}$ and suitable $l=\overline{1,\infty},$ and $\psi
_{k}^{(j)}\in C^{2}(\mathbb{R}^{\mathbb{Z}_{+}}\times\mathbb{T}^{n}%
;\mathbb{C})$ for all $k=\overline{1,\infty},j=\overline{0,n}.$ Write down now
the condition (\ref{V7}) on the manifold $\ \mathbb{C\times T}^{n}$ in the
equivalent form
\begin{equation}
|\frac{\partial\mathrm{\Psi}}{\partial\mathrm{x}}|^{-1}\mathrm{d}\psi
^{(0)}\wedge\mathrm{d}\psi^{(1)}\wedge\mathrm{d}\psi^{(2)}\wedge
...\wedge\mathrm{d}\psi^{(n)}=d\lambda\wedge dx_{1}\wedge dx_{2}%
\wedge...\wedge dx_{n}\ \label{V11}%
\end{equation}
at fixed parameters \ $t\in\mathbb{R}^{\mathbb{Z}_{+}},$ where $\mathrm{x}%
:=(\lambda,x)\in\mathbb{C}\times\mathbb{T}^{n},$ $|\frac{\partial\mathrm{\Psi
}}{\partial\mathrm{x}}|$ is the Jacobi determinant of the mapping
$\mathrm{\Psi}:=\ (\psi^{(0)},\psi^{(1)},\psi^{(2)},...,\psi^{(n)}%
)^{\intercal}$ $\in C^{2}(\mathbb{C}\times(\mathbb{R}^{\mathbb{Z}_{+}}%
\times\mathbb{T}^{n});\mathbb{C}^{n+1})$ subject to the variables
$\mathrm{x}\in\mathbb{C}\times\mathbb{T}^{n}.$ \ Inasmuch as this mapping
subject to the parameter $\lambda\in\mathbb{C}$ has analytical continuation
\cite{CoLe,Golu} inside $\ \mathbb{S}_{+}^{1}\subset\mathbb{C}$ of the circle
$\mathbb{S}^{1}\subset\mathbb{C}$ and subject to the parameter $\lambda
^{-1}\in\mathbb{C}\ $\ as $|\lambda|\rightarrow\infty$ \ outside
$\mathbb{S}_{-}^{1}\subset\mathbb{C}$ of this circle $\mathbb{S}^{1}%
\subset\mathbb{C},$ one can easily obtain from the holomorphic structure of
the vector fields \ (\ref{V8}) subject to the complex variable $\lambda
\in\mathbb{C}$ and the relationship \ (\ref{V11}), reduced on the manifold
$\mathbb{S}_{-}^{1}\mathbb{\times T}^{n},$ the following analytical criterion:%
\begin{equation}
\left(  |\frac{\partial\mathrm{\Psi}}{\partial\mathrm{x}}|^{-1}\mathrm{d}%
\psi^{(0)}\wedge\mathrm{d}\psi^{(1)}\wedge\mathrm{d}\psi^{(2)}\wedge
...\wedge\mathrm{d}\psi^{(n)}\right)  _{-}=0,\ \label{V12}%
\end{equation}
where $(...)_{-}$ means the asymptotic part of the expression in brackets,
depending on the negative degree parameter $\lambda^{-1}\in\mathbb{S}_{-}^{1}$
as $|\lambda|\rightarrow\infty.$ Now, \ since the full differentials
$d\psi^{(j)}\in\Lambda^{1}(\mathbb{C}\times(\mathbb{R}^{n\times\mathbb{Z}_{+}%
}\times\mathbb{T}^{n}),$ $j=\overline{0,n},$ \ vanish on \ $\mathbb{C}%
\times(\mathbb{R}^{n\times\mathbb{Z}_{+}}\times\mathbb{T}^{n},$ we have%
\begin{equation}
\mathrm{d}\psi^{(j)}=-\sum_{k=0}^{\infty}\frac{\partial\psi^{(j)}}%
{\partial\tau_{k}^{(j)}}d\tau_{k}^{(j)}, \label{V13}%
\end{equation}
whose substitution into (\ref{V12}) yields
\begin{equation}
\frac{\partial\mathrm{\Psi}}{\partial\tau_{k}^{(j)}}=\left[  \left(
\frac{\partial\mathrm{\Psi}}{\partial\mathrm{x}}\right)  _{0j}^{-1}%
\psi^{_{(0)}}(\lambda)^{k}\right]  _{+}\ \frac{\partial\mathrm{\Psi}}%
{\partial\lambda}+\sum_{s=1}^{n}\ \left[  \left(  \frac{\partial\mathrm{\Psi}%
}{\partial\mathrm{x}}\right)  _{sj}^{-1}\psi^{(0)}(\lambda)^{k}\right]
_{+}\ \frac{\partial\mathrm{\Psi}}{\partial x_{s}} \label{V14}%
\end{equation}
for all $k\in\mathbb{Z}_{+},j=\overline{1,n};$ these expressions comprise an
infinite hierarchy of Lax--Sato compatible \cite{TaTa-1,TaTa-2} linear
equations, where $(...)_{+}$ denotes the asymptotic part of the expression in
brackets, depending on positive powers of the parameter $\lambda\in
\mathbb{C}.$ \ As for the functional parameters $\tau_{k}^{(j)}\in
C^{1}(\mathbb{R}^{\mathbb{Z}_{+}}\times\mathbb{T}^{n};\mathbb{C})$ for all
$k\in\mathbb{Z}_{+},j=\overline{1,n},$ one can prove their functional
independence by taking into account their \textit{a priori} linear dependence
on the corresponding independent evolution parameters $t_{k}\in\mathbb{R},$
$\ k\in\mathbb{Z}_{+}.$ On the other hand, taking into account the explicit
form of the hierarchy of equations \ (\ref{V14}), following \cite{BoDrMa}, it
is not hard to show that the corresponding vector \ fields
\begin{equation}
\mathrm{A}_{k}^{(j)}:=\left[  \left(  \frac{\partial\mathrm{\Psi}}%
{\partial\mathrm{x}}\right)  _{0j}^{-1}\psi^{_{(0)}}(\lambda)^{k}\right]
_{+}\ \frac{\partial}{\partial\lambda}+\sum_{s=1}^{n}\ \left[  \left(
\frac{\partial\mathrm{\Psi}}{\partial\mathrm{x}}\right)  _{sj}^{-1}\psi
^{(0)}(\lambda)^{k}\right]  _{+}\ \frac{\partial}{\partial x_{s}} \label{V15}%
\end{equation}
on the manifold $\mathbb{C}\times\mathbb{T}^{n}$ satisfy for all
$k,m\in\mathbb{Z}_{+},j,l=\overline{1,n},$ the Lax compatibility conditions
\begin{equation}
\frac{\partial\mathrm{A}_{m}^{(l)}}{\partial\tau_{k}^{(j)}}-\frac
{\partial\mathrm{A}_{k}^{(j)}}{\partial\tau_{m}^{(l)}}=[\mathrm{A}_{k}%
^{(j)},\mathrm{A}_{m}^{(l)}], \label{V16}%
\end{equation}
which are equivalent to the independence of the all functional parameters
$\tau_{k}^{(j)}\in C^{1}(\mathbb{R}^{\mathbb{Z}_{+}}\times\mathbb{T}%
^{n};\mathbb{C}),$ $k\in\mathbb{Z}_{+},j=\overline{1,n}.$ \ As a corollary of
the analysis above, one can show that the infinite hierarchy of vector fields
\ (\ref{V8}) is a linear combination of the basic vector fields \ (\ref{V15})
and also satisfies the Lax type compatibility condition \ (\ref{V16}).
Inasmuch the coefficients of vector fields (\ref{V15}) are suitably smooth
functions on the manifold $\mathbb{\ }\ \mathbb{R}^{\mathbb{Z}_{+}}%
\times\mathbb{T}^{n},$ the compatibility conditions (\ref{V16}) yield the
corresponding sets of differential-algebraic relationships on their
coefficients, which have the common infinite set of invariants, thereby
comprising an infinite hierarchy of completely integrable so called
\textit{heavenly} nonlinear dynamical systems on the corresponding
multidimensional functional manifolds. \ That is, all of the above can be
considered as an introduction to a recently devised
\cite{TaTa-1,TaTa-2,BoDrMa,BoPa} constructive algorithm for generating
infinite hierarchies of completely integrable nonlinear dynamical systems of
heavenly type on functional manifolds of arbitrary dimension. It is worthwhile
to stress here that the above constructive algorithm for generating completely
integrable nonlinear multidimensional dynamical systems still does not make it
possible to directly show they are Hamiltonian and construct other related
mathematical structures. This important problem is solved by employing other
mathematical theories; for example, the analytical properties of the related
loop diffeomorphisms groups generated by the hierarchy of vector fields
\ (\ref{V8}).

\begin{remark}
The compatibility condition \ (\ref{V16}) allows an alternative
differential-geometric description based on the Lie-algebraic properties of
the basic vector fields \ (\ref{V15}). Namely, consider the manifold
$\ \mathbb{R}^{n\times\mathbb{Z}_{+}},$ as the base manifold of the vector
bundle $E(\mathbb{R}^{n\times\mathbb{Z}_{+}},G),$ $\ E=\cup_{\tau
\in\ \mathbb{R}^{n\times\mathbb{Z}_{+}}}\{(G^{\ast}\otimes\tau)/\rho\},$
$G^{\ast}:=\{\varphi^{\ast}:\varphi^{\ast}\beta^{(1)}:=\alpha^{(1)}%
\circ\varphi,$ \ $\beta^{(1)}\in\tilde{\Lambda}^{(1)}(\mathbb{C\times
}\mathbb{T}^{n};\mathbb{C)},\varphi\in G\}\ \ $ \ for an equivalence relation
$\rho$ \ and the $($holomorphic in $\lambda\in\mathbb{S}_{+}^{1}\cup
\mathbb{S}_{-}^{1}\subset\mathbb{C)}$ structure group $G=Diff_{hol}%
(\mathbb{C\times}\mathbb{T}^{n}),$ naturally acting on the vector space $E.$
The structure group can be endowed with a connection $\Upsilon$ \ by means of
a mapping $d_{h}:\Gamma(E)\rightarrow\Gamma(T^{\ast}(\mathbb{R}^{n\times
\mathbb{Z}_{+}})\otimes E)$ $\cong\Gamma(Hom(T(\mathbb{R}^{n\times
\mathbb{Z}_{+}});E)),$ where
\begin{equation}
d_{h}\varphi_{\tau}^{\ast}\ :=\sum_{j\in\mathbb{Z}_{+}}d\tau_{j}^{(k)}%
\otimes\frac{\partial\ }{\partial\tau_{j}^{(k)}}\circ\varphi_{\tau}^{\ast
}+\varphi_{\tau}^{\ast}\circ<\alpha^{(1)},\frac{\partial}{\partial\mathrm{x}%
}>, \label{V16a}%
\end{equation}
$\alpha^{(1)}:=\sum_{j\in\mathbb{Z}_{+}}a_{j}^{(k)}d\tau_{j}^{(k)}\in
\Lambda(\mathbb{R}^{n\times\mathbb{Z}_{+}})\otimes\Gamma(E),$ which is defined
for any cotangent diffeomorphism $\ \ \varphi_{\tau}^{\ast}\in E,$ $\tau
\in\mathbb{R}^{n\times\mathbb{Z}_{+}},$ generated by the set \ of parametric
vector fields \ (\ref{V15}), and naturally acting on any mapping $\psi\in
C^{2}(\mathbb{R}^{n\times\mathbb{Z}_{+}}\times(\ \mathbb{C\times T}%
^{n}\mathbb{)};\mathbb{C})$ as $\varphi_{\tau}^{\ast}\circ\psi(\tau
,\mathrm{x}):=\psi(\tau,\varphi_{\tau}(\mathrm{x})),\ (\tau,x)\in
\mathbb{R}^{n\times\mathbb{Z}_{+}}\times\mathbb{T}^{n}.$ It is easy now to see
that the corresponding to \ (\ref{V16a}) zero curvature condition $d_{h}%
^{2}=0$ is equivalent to the set of compatibility equations \ (\ref{V16}).
Moreover, the parallel transport equation
\begin{equation}
d_{h}\varphi_{\tau}^{\ast}\circ\psi\ =0 \label{V16b}%
\end{equation}
coincides exactly with the infinite hierarchy of linear vector field equations
\ (\ref{V14}), where $\psi\in C^{2}(\mathbb{R}^{n\times\mathbb{Z}_{+}}%
\times\mathbb{T}^{n};\mathbb{R})$ is their invariant. Conversely, the Cartan
integrable ideal of differential forms $\ h(\alpha)\in\Lambda(\mathbb{R}%
^{n\times\mathbb{Z}_{+}}\times\mathbb{T}^{n})\otimes\Gamma(T^{\ast}%
(\mathbb{R}^{n\times\mathbb{Z}_{+}})),$ which is equivalent to the zero
curvature condition $d_{h}^{2}=0,$ \ makes it possible to retrieve
\cite{BlPrSa,PrMy} the corresponding connection $\Upsilon$ by constructing a
mapping $d_{h}:\Gamma(E)\rightarrow\Gamma(T^{\ast}(\mathbb{R}^{n\times
\mathbb{Z}_{+}})\otimes E)$ $\cong\Gamma(Hom(T(\mathbb{R}^{n\times
\mathbb{Z}_{+}});E))$ in the form \ (\ref{V16a}). These and other interesting
related aspects of the integrable heavenly dynamical systems shall be
investigated separately elsewhere.
\end{remark}

\subsection{Example: the vector field representation for the Mikhalev--Pavlov
equation}

The Mikhalev--Pavlov equation was first constructed in \cite{Mikh} and has the
form
\begin{equation}
u_{xt}+u_{yy}=u_{y}u_{xx}-u_{x}u_{xy}, \label{V17}%
\end{equation}
where $u\in C^{\infty}(\mathbb{R}^{2}\times\mathbb{T}^{1};\mathbb{R})$ and
$(t,y;x)\in\mathbb{R}^{2}\times\mathbb{T}^{1}.$ Assume now \cite{BoDrMa} that
the following two functions
\begin{equation}
\psi^{(0)}=\lambda,\text{ \ \ }\psi^{(1)}\sim\sum_{k=3}^{\infty}\lambda
^{k}\tau_{k}-\lambda^{2}t+\lambda y+x+\sum_{j=1}^{\infty}\psi_{j}%
^{(1)}(t,y,\tau;x)\ \lambda^{-j}, \label{V19}%
\end{equation}
where $\psi_{1}^{(1)}(t,y,\tau;x)=u,$ $(t,y,\tau;x)\in\mathbb{R}^{2}%
\times\mathbb{R}^{\infty}\times\mathbb{T}^{1},$ are invariants of the set of
vector fields \ (\ref{V9}) for an infinite set of constant parameters
$\tau_{k}\in\mathbb{R},k=\overline{3,\infty},$ as the complex parameter
$\lambda\rightarrow\infty.$ By applying to\ the invariants (\ref{V19}) the
criterion \ (\ref{V12}), \ (\ref{V13}) in the form
\begin{equation}
((\partial\psi^{(1)}/\partial x)^{-1}\mathrm{d}\psi^{(1)})_{-}=0, \label{V20}%
\end{equation}
one can easily obtain the following compatible\ linear vector field
equations\ \
\begin{align}
\frac{\partial\psi}{\partial t}+(\lambda^{2}+\lambda u_{x}-u_{y}%
)\frac{\partial\psi}{\partial x}  &  =0\label{V20a}\\
& \nonumber\\
\frac{\partial\psi}{\partial y}+(\lambda+u_{x})\frac{\partial\psi}{\partial
x}\  &  =0,\nonumber\\
&  ...\nonumber\\
\frac{\partial\psi}{\partial\tau_{k}}+P_{k}(u;\lambda)\frac{\partial\psi
}{\partial x}  &  =0,\nonumber
\end{align}
where $P_{k}(u;\lambda),k=\overline{3,\infty},$ are independent
differential-algebraic polynomials in the variable $u\in C^{\infty}%
(\mathbb{R}^{2}\times\mathbb{R}^{\infty}\times\mathbb{T}^{1})$ and algebraic
polynomials in the spectral parameter $\lambda\in\mathbb{C},$ \ calculated
from the expressions \ (\ref{V14}). Moreover, as one can check, the
compatibility condition \ (\ref{V16}) for the first two vector field equations
of \ (\ref{V20a}) yields exactly the Mikhalev--Pavlov equation \ (\ref{V17}).

\subsection{Example: The Dunajski metric nonlinear equation}

The equations for the Dunajski metric \cite{Duna} are
\begin{align}
u_{x_{1}t}+u_{yx_{2}}+u_{x_{1}x_{1}}u_{x_{2}x_{2}}-u_{x_{1}x_{2}}-v  &
=0,\label{V21}\\
& \nonumber\\
v_{x_{1}t}+v_{x_{2}y}+u_{x_{1}x_{1}}v_{x_{2}x_{2}}-2u_{x_{1}x_{2}}%
v_{x_{1}x_{2}}  &  =0,\nonumber
\end{align}
where $(u,v)\in C^{\infty}(\mathbb{R}^{2}\times\mathbb{T}^{2};\mathbb{R}%
^{2}),$ $(y,t;x_{1},x_{2})\in\mathbb{R}^{2}\times\mathbb{T}^{2}.$ One can
construct now, by definition, the following asymptotic expansions \
\begin{align}
\psi^{(0)}  &  \sim\lambda+\sum_{j=1}^{\infty}\psi_{j}^{(0)}(t,y;x)\lambda
^{-j},\text{ \ \ }\label{V22}\\
& \nonumber\\
\psi^{(1)}  &  \sim\sum_{k=2}^{\infty}(\psi^{(0)})^{k}\tau_{k}^{(1)}%
-\psi^{(0)}y+x_{1}+\sum_{j=1}^{\infty}\psi_{j}^{(1)}(t,y;x)\ (\psi^{(0)}%
)^{-j},\nonumber\\
& \nonumber\\
\psi^{(2)}  &  \sim\sum_{k=2}^{\infty}(\psi^{(0)})^{k}\tau_{k}^{(2)}%
+\psi^{(0)}t+x_{2}+\sum_{j=1}^{\infty}\psi_{j}^{(1)}(t,y;x)\ (\psi^{(0)}%
)^{-j},\nonumber
\end{align}
where $\partial u/\partial x_{1}:=\psi_{1}^{(2)},\partial u/\partial
x_{2}:=\psi_{1}^{(1)},v:=\psi_{1}^{(0)}$ and $\ \tau_{k}^{(s)}\in
\mathbb{R},s=\overline{1,2},k=\overline{2,\infty},$ are constant parameters.
Then the Lax--Sato conditions \ (\ref{V12}),\ (\ref{V13})
\begin{equation}
\left(  \left\vert \frac{\partial(\psi^{(0)},\psi^{(1)},\psi^{(2)})}%
{\partial(\lambda,x_{1},x_{2})}\right\vert ^{-1}\mathrm{d}\psi^{(0)}%
\wedge\mathrm{d}\psi^{(1)}\wedge\mathrm{d}\psi^{(2)}\right)  _{-}%
=0\ \label{V22a}%
\end{equation}
yield a compatible hierarchy of the following linear vector field equations:%
\begin{equation}%
\begin{array}
[c]{c}%
X^{(t_{0})}\psi:=\frac{\partial\psi}{\partial t}+\mathrm{X}^{(t_{0})}%
\psi=0,\text{ \ }\mathrm{X}^{(t_{0})}:=u_{x_{2}x_{2}}\frac{\partial}{\partial
x_{1}}-(\lambda+u_{x_{1}x_{2})}\frac{\partial}{\partial x_{2}}+v_{x_{2}}%
\frac{\partial}{\partial\lambda}=0,\\
\\
X^{(t_{1})}\psi:=\frac{\partial\psi}{\partial y}+\mathrm{X}^{(t_{1})}%
\psi=0,\text{ \ }\mathrm{X}^{(t_{1})}:=(\lambda-u_{x_{1}x_{2}})\frac{\partial
}{\partial x_{1}}+u_{x_{1}x_{1}}\frac{\partial}{\partial x_{2}}-v_{x_{1}}%
\frac{\partial}{\partial\lambda}=0,\\
\\
X^{(t_{k}^{(s)})}\psi:=\frac{\partial\psi}{\partial\tau_{k}^{s}}+<P_{k}%
^{s}(u;\lambda),\frac{\partial\psi}{\partial\mathrm{x}}>=0,
\end{array}
\label{V23}%
\end{equation}
where $P_{k}^{s}(u,v;\lambda)\in\mathbb{E}^{3},s=\overline{1,2},k=\overline
{2,\infty},$ are analytic in $\lambda\in\mathbb{C}$ independent
differential-algebraic vector polynomials \cite{BoPa} in the variables
$(u,v)\in C^{\infty}(\mathbb{R}^{2}\times\mathbb{R}^{\infty}\times
\mathbb{T}^{2};\mathbb{R}^{2})$ and algebraic polynomials in the spectral
parameter $\lambda\in\mathbb{C},\ $\ calculated from the expressions
\ (\ref{V14}). In particular, \ the compatibility condition \ (\ref{V16}) for
the first two equations of \ (\ref{V22a}) is equivalent to the Dunajski metric
nonlinear equations\ (\ref{V21}).\ 

The description of the Lax--Sato equations presented above, especially their
alternative differential-geometric interpretation \ (\ref{V16a}) and
\ (\ref{V16b}), \ makes it possible to realize\ that the structure group
$Diff_{hol}(\mathbb{C\times}\mathbb{T}^{n})$ should play an important role in
unveiling the hidden Lie-algebraic nature of the integrable heavenly dynamical
systems. This is actually the case, and a detailed analysis is presented in
the sequel.

\section{\label{Sec_2}Heavenly equations: the Lie-algebraic integrability
scheme}

Let $\tilde{G}_{\pm}:=\widetilde{Diff}_{\pm}(\mathbb{T}^{n}),$ $n\in
\mathbb{Z}_{+},$ be subgroups of the loop diffeomorphisms group $\widetilde
{Diff}(\mathbb{T}^{n}):=$ $\{\mathbb{C}\supset\mathbb{S}^{1}\rightarrow
Diff(\mathbb{T}^{n})\},$ holomorphically extended in the interior
$\mathbb{S}_{+}^{1}\subset\mathbb{C}$ and in the exterior $\mathbb{S}_{-}%
^{1}\subset\mathbb{C}$ \ regions of the unit circle $\mathbb{S}^{1}%
\subset\mathbb{C}^{1},$ such that for any $g(\lambda)\in\tilde{G}_{\pm},$
$\lambda\in\mathbb{\ }\mathbb{S}_{-}^{1},$ $g(\infty)=1\in Diff(\mathbb{T}%
^{n}).$ The corresponding Lie subalgebras $\mathcal{\tilde{G}}_{\pm
}:=\widetilde{diff}_{\pm}(\mathbb{T}^{n})$ of the loop subgroups $\ \tilde
{G}_{\pm}$ are vector fields on $\mathbb{T}^{n}$ \ holomorphic, respectively,
on $\mathbb{S}_{\pm}^{1}\subset\mathbb{C}^{1},$ where for any $\tilde
{a}(\lambda)\in\mathcal{\tilde{G}}_{-}$ \ the value $\tilde{a}(\infty)=0.$
\ The split loop Lie algebra $\mathcal{\tilde{G}=\tilde{G}}_{+}+$
$\mathcal{\tilde{G}}_{-}$ can be naturally identified with a dense subspace of
the dual space $\mathcal{\tilde{G}}^{\ast}$ through the pairing
\begin{equation}
(\tilde{l},\tilde{a}):=\frac{1}{2\pi i}\oint\limits_{\mathbb{S}^{1}%
}(l(x,\lambda),a(x,\lambda))_{H^{q}}\frac{d\lambda}{\lambda^{p}},
\label{eq1.1}%
\end{equation}
for\ some fixed $p,q\in\mathbb{Z}_{+}.\ \ $ We took above, by definition
\cite{Cart,PrPr}, a loop vector field $\tilde{a}\in\Gamma(\tilde{T}%
(\mathbb{T}^{n}))$\ and a loop differential 1-form $\tilde{l}\in\tilde
{\Lambda}^{1}(\mathbb{T}^{n})$ given as
\begin{align}
&  \tilde{a}\ =\sum\limits_{j=1}^{n}a^{(j)}(x,\lambda)\frac{\partial}{\partial
x_{j}}:=\left\langle a(x;\lambda),\frac{\partial}{\partial x}\right\rangle
,\label{eq1.2}\\
&  \tilde{l}\ =\sum\limits_{j=1}^{n}l_{j}(x,\lambda)dx_{j}:=\ \left\langle
l(x;\lambda),dx\right\rangle ,\nonumber
\end{align}
introduced for brevity the gradient operator $\frac{\partial}{\partial
x}:=\left(  \frac{\partial}{\partial x_{1}},\frac{\partial}{\partial x_{2}%
},...,\frac{\partial}{\partial x_{n}}\right)  ^{\intercal}$ in the Euclidean
space $\mathbb{E}^{n}$ \ and chose the Sobolev type metric $(\cdot
,\cdot)_{H^{q}}$ on the space $\ C^{\infty}(\mathbb{T}^{n};\mathbb{R}%
^{n})\subset H^{q}(\mathbb{T}^{n};\mathbb{R}^{n})$ for some $q\in
\mathbb{Z}_{+}$ as
\begin{equation}
(l(x;\lambda),a(x;\lambda))_{H^{q}}:=\sum\limits_{j=1}^{n}\sum\limits_{|\alpha
|=0}^{q}\int\limits_{\mathbb{T}^{n}}dx\left(  \frac{\partial^{|\alpha|}%
l_{j}(x;\lambda)}{\partial x^{\alpha}}\frac{\partial^{|\alpha|}a^{(j)}%
(x;\lambda)}{\partial x^{\alpha}}\right)  , \label{eq1.2a}%
\end{equation}
where $\partial x^{\alpha}:=\partial x_{1}^{\alpha_{1}}\partial x_{2}%
^{\alpha_{2}}...\partial x_{2}^{\alpha_{n}},|\alpha|=\sum_{j=1}^{n}\alpha_{j}$
for $\alpha\in\mathbb{Z}_{+}^{n},$ \ \ generalizing the metric used before in
\cite{Misi}. \ The Lie commutator of vector fields $\tilde{a},\tilde{b}$
$\in\mathcal{\tilde{G}} $ \ is calculated the standard way and equals
\begin{align}
\lbrack\tilde{a},\tilde{b}]  &  =\tilde{a}\tilde{b}-\tilde{b}\tilde
{a}=\left\langle \left\langle a(x;\lambda),\frac{\partial}{\partial
x}\right\rangle b(x;\lambda),\frac{\partial}{\partial x}\right\rangle
-\label{eq1.3}\\
& \nonumber\\
&  -\ \left\langle \left\langle b(x;\lambda),\frac{\partial}{\partial
x}\right\rangle a(x;\lambda),\frac{\partial}{\partial x}\right\rangle
.\nonumber
\end{align}
The Lie algebra $\mathcal{\tilde{G}}$ \ naturally splits into the direct sum
of two Lie subalgebras
\begin{equation}
\mathcal{\tilde{G}}=\mathcal{\tilde{G}}_{+}\oplus\mathcal{\tilde{G}}_{-},
\label{eq1.3a}%
\end{equation}
for which one can identify the dual spaces
\[
\mathcal{\tilde{G}}_{+}^{\ast}\simeq\lambda^{p-1}\mathcal{\tilde{G}}%
_{-},\ \ \ \ \ \ \mathcal{\tilde{G}}_{-}^{\ast}\simeq\lambda^{p-1}%
\mathcal{\tilde{G}}_{+},
\]
where for any $l(\lambda)\in\mathcal{\tilde{G}}_{-}^{\ast}$ one has the
constraint $\tilde{l}(0)=0$. Having defined now the projections
\begin{equation}
P_{\pm}\mathcal{\tilde{G}}:=\mathcal{\tilde{G}}_{\pm}\subset\mathcal{\tilde
{G}}, \label{eq1.4}%
\end{equation}
one can construct a classical $\mathcal{R}$-structure \cite{FaTa,ReSe,Seme} on
the Lie algebra $\mathcal{\tilde{G}}$ as the endomorphism $\mathcal{R}%
:\mathcal{\tilde{G}\rightarrow\tilde{G}},$ where
\begin{equation}
\mathcal{R}:=\ (P_{+}-P_{-})/2, \label{eq1.5}%
\end{equation}
which allows to determine on the vector space $\mathcal{\tilde{G}}$ \ the new
Lie algebra structure
\begin{equation}
\lbrack\tilde{a},\tilde{b}]_{\mathcal{R}}:=[\mathcal{R}\tilde{a},\tilde
{b}]+[\tilde{a},\mathcal{R}\tilde{b}] \label{eq1.6}%
\end{equation}
for any $\tilde{a},\tilde{b}\in\mathcal{\tilde{G}},$ \ satisfying the standard
Jacobi identity.

Let $\mathrm{D}\mathcal{(\tilde{G}}^{\ast})$ denote the space of smooth
functions on $\mathcal{\tilde{G}}^{\ast}$. Then for any $f,g\in
\mathcal{D(\tilde{G}}^{\ast})$ one can write the canonical
\cite{FaTa,ReSe,PrMy,BlPrSa} Lie--Poisson bracket
\begin{equation}
\{f,g\}:=(\tilde{l},[\nabla f(\tilde{l}),\nabla g(\tilde{l})]), \label{eq1.8}%
\end{equation}
where $\tilde{l}\in\mathcal{\tilde{G}}^{\ast}$ is a seed element and $\nabla
f,$ $\nabla g\in\mathcal{\tilde{G}}$ \ are the standard functional gradients
at $\tilde{l}\in\mathcal{\tilde{G}}^{\ast}$ with respect to the metric
(\ref{eq1.1}). The related to (\ref{eq1.8}) space $I\mathcal{(\tilde{G}}%
^{\ast})$ of \ Casimir invariants is defined as the\ set $I(\mathcal{\tilde
{G}}^{\ast})\subset\mathrm{D}\mathcal{(\tilde{G}}^{\ast})$ of smooth
independent functions $\gamma_{j}\in\mathrm{D}\mathcal{(\tilde{G}}^{\ast
}),j=\overline{1,n},$ for which
\begin{equation}
ad_{\nabla h_{j}(\tilde{l})}^{\ast}\tilde{l}=0, \label{eq1.9}%
\end{equation}
where for any seed element
\begin{equation}
\tilde{l}=<l,dx>\ \ \label{eq1.10}%
\end{equation}
the gradients%
\begin{equation}
\nabla\gamma_{j}(\tilde{l}):=\ \left\langle \nabla\gamma_{j}(l),\frac
{\partial}{\partial x}\right\rangle \ \ \label{eq1.11}%
\end{equation}
and the coadjoint action (\ref{eq1.9}) can be equivalently rewritten, for
instance, in the case $q=0,$ as
\begin{equation}
\ \left\langle \frac{\partial}{\partial x},\nabla\gamma_{j}(l)\right\rangle
l+\ \left\langle l,(\frac{\partial}{\partial x}\nabla\gamma_{j}%
(l))\right\rangle =0\ \label{eq1.12}%
\end{equation}
for any $j=\overline{1,n}.$ If one takes two smooth functions $\gamma
^{(y)},\gamma^{(t)}\in I(\mathcal{\tilde{G}}^{\ast})\subset\mathrm{D}%
(\mathcal{\tilde{G}}^{\ast}),$ \ their second Poisson bracket
\begin{equation}
\{\gamma^{(y)},\gamma^{(t)}\}_{\mathcal{R}}:=(\tilde{l},[\nabla\gamma
^{(y)},\nabla\gamma^{(t)}]_{\mathcal{R}}) \label{eq1.13}%
\end{equation}
on the space $\mathcal{\tilde{G}}^{\ast}$ vanishes; that is,
\begin{equation}
\{\gamma^{(y)},\gamma^{(t)}\}_{\mathcal{R}}=0 \label{eq1.13a}%
\end{equation}
at any seed element $\tilde{l}\in\mathcal{\tilde{G}}^{\ast}.$ Since the
functions $\gamma^{(y)},\gamma^{(t)}\in I(\mathcal{\tilde{G}}^{\ast}),$ the
following coadjoint action relationships hold:
\begin{equation}
ad_{\nabla\gamma^{(y)}(\tilde{l})}^{\ast}\tilde{l}=0,\ \ \ \ \ ad_{\nabla
\gamma^{(t)}(\tilde{l})}^{\ast}\tilde{l}=0, \label{eq1.14}%
\end{equation}
which can be equivalently rewritten (as above in the case $q=0)$ as
\begin{equation}%
\begin{array}
[c]{c}%
\left\langle \frac{\partial}{\partial x},\nabla\gamma^{(y)}(l)\right\rangle
l+\ \left\langle l,(\frac{\partial}{\partial x}\nabla\gamma^{(y)}%
(l))\right\rangle \ =\\
\\
=\left\langle \nabla\gamma^{(y)}(l),\frac{\partial}{\partial x}\right\rangle
l+\left\langle (\frac{\partial}{\partial x},\nabla\gamma^{(y)}%
(l))\right\rangle l+\left\langle l,(\frac{\partial}{\partial x}\nabla
\gamma^{(y)}(l))\right\rangle :=\\
\\
=(\tilde{A}_{\nabla\gamma^{(y)}}+B_{\nabla\gamma^{(y)}})l\
\end{array}
\ \label{eq1.15}%
\end{equation}
and similarly
\begin{equation}
\
\begin{array}
[c]{c}%
\left\langle \frac{\partial}{\partial x},\nabla\gamma^{(t)}(l)\right\rangle
l+\ \left\langle l,(\frac{\partial}{\partial x},\nabla\gamma^{(t)}%
(l))\right\rangle \ :=(\tilde{A}_{\nabla\gamma^{(t)}}+B_{\nabla\gamma^{(t)}%
})l,
\end{array}
\label{eq1.16}%
\end{equation}
where the expressions
\begin{equation}
\tilde{A}_{\nabla\gamma^{(y)}}:=\ \left\langle \nabla\gamma^{(y)}(l),\frac
{d}{dx}\right\rangle ,\text{ \ \ }\tilde{A}_{\nabla\gamma^{(t)}}:=\left\langle
\nabla\gamma^{(t)}(l),\frac{\partial}{\partial x}\right\rangle \label{eq1.16a}%
\end{equation}
are true vector fields on $\mathbb{T}^{n},$ yet the expressions
\begin{align}
&  B_{\nabla\gamma^{(y)}}:=\ \left\langle (\frac{\partial}{\partial x}%
,\nabla\gamma^{(y)}(l))\right\rangle \ +\ \left(  \frac{\partial}{\partial
x}\nabla\gamma^{(y)}(l)\right)  ,\label{eq1.17}\\
& \nonumber\\
&  B_{\nabla\gamma^{(t)}}:=\ \left\langle (\frac{\partial}{\partial x}%
,\nabla\gamma^{(t)}(l))\right\rangle \ +\ \left(  \frac{\partial}{\partial
x}\nabla\gamma^{(t)}(l)\right)  ,\nonumber
\end{align}
are the usual matrix homomorphisms of the Euclidean space $\mathbb{E}^{n}.$

Consider now the following Hamiltonian flows on the space $\mathcal{\tilde{G}%
}^{\ast}$:
\begin{align}
&  \partial\tilde{l}/\partial y:=\{h^{(y)},\tilde{l}\}_{\mathcal{R}%
}=-ad_{\nabla h^{(y)}(\tilde{l})_{+}}^{\ast}\tilde{l},\label{eq1.18}\\
& \nonumber\\
&  \partial\tilde{l}/\partial t:=\{h^{(t)},\tilde{l}\}_{\mathcal{R}%
}=-ad_{\nabla h^{(t)}(\tilde{l})_{+}}^{\ast}\tilde{l},\nonumber
\end{align}
where $h^{(y)},h^{(t)}\in I(\mathcal{\tilde{G}}^{\ast})$ and $y,t\in
\mathbb{R}$ are the corresponding evolution parameters. \ Since $h^{(y)}%
,h^{(t)}\in I(\mathcal{\tilde{G}}^{\ast})$ are Casimirs, the flows
(\ref{eq1.18}) commute. Thus, taking into account the representations
(\ref{eq1.15}), one can recast the flows (\ref{eq1.18}) as
\begin{equation}
\partial l/\partial t=-(\tilde{A}_{\nabla h_{+}^{(t)}}+B_{\nabla h_{+}^{(t)}%
})l,\ \ \ \ \ \partial l/\partial y=-(\tilde{A}_{\nabla h_{+}^{(y)}}+B_{\nabla
h_{+}^{(y)}})l, \label{eq1.19}%
\end{equation}
where%
\begin{equation}
\tilde{A}_{\nabla h_{+}^{(t)}}:=\ \left\langle \nabla h^{(t)}(l)_{+}%
,\frac{\partial}{\partial x}\right\rangle ,\ \ \ \ \ \tilde{A}_{\nabla
h_{+}^{(y)}}:=\ \left\langle \nabla h^{(y)}(l)_{+},\frac{\partial}{\partial
x}\right\rangle . \label{eq1.19a}%
\end{equation}

\begin{lemma}
\bigskip\label{Lm_1.1}The compatibility of commuting flows \ (\ref{eq1.19}) is
equivalent to the Lax type vector fields relationship
\begin{equation}
\partial\tilde{A}_{\nabla h_{+}^{(t)}}/\partial y-\partial\tilde{A}_{\nabla
h_{+}^{(y)}}/\partial t+[\tilde{A}_{\nabla h_{+}^{(y)}},\tilde{A}_{\nabla
h_{+}^{(t)}}]=0, \label{eq1.20}%
\end{equation}
which holds for all $y,t\in\mathbb{R}$ \ and arbitrary $\lambda\in\mathbb{C}.
$
\end{lemma}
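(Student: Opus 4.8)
The plan is to translate the compatibility of the two flows (\ref{eq1.19}) into a single operator identity acting on the seed coefficient $l$, and then to read off its vector-field part. First I would record the structural fact underlying (\ref{eq1.15})--(\ref{eq1.17}): the assignment $a \mapsto \tilde{A}_a + B_a$ sends a loop vector field $a$ to the coefficient representation of (minus) the Lie derivative, since $-ad^{\ast}_a \tilde{l}$ equals $\langle (\tilde{A}_a + B_a)l, dx \rangle$. Because $\mathcal{L}_{[a,b]} = [\mathcal{L}_a, \mathcal{L}_b]$ on $1$-forms, this assignment is a Lie-algebra homomorphism (up to the usual coadjoint sign), so that $[\tilde{A}_a + B_a, \tilde{A}_b + B_b]$ reproduces $\tilde{A}_{[a,b]} + B_{[a,b]}$; moreover $\tilde{A}_a = \langle a, \partial/\partial x \rangle$ is literally the vector field $a$ itself, whence $\tilde{A}_{[a,b]} = [\tilde{A}_a, \tilde{A}_b]$ as vector fields.

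Next I would impose $\partial_y \partial_t l = \partial_t \partial_y l$ directly on (\ref{eq1.19}). Writing $M^{(t)} := \nabla h^{(t)}(l)_{+}$ and $M^{(y)} := \nabla h^{(y)}(l)_{+}$, which are loop vector fields depending on the times only through $l$, I would differentiate each equation crosswise and subtract. The second-order pieces assemble, by the homomorphism property above, into the commutator term $\tilde{A}_{[M^{(y)}, M^{(t)}]} + B_{[M^{(y)}, M^{(t)}]}$ acting on $l$, while the mixed first-order pieces assemble into $\tilde{A}_{\partial_y M^{(t)} - \partial_t M^{(y)}} + B_{\partial_y M^{(t)} - \partial_t M^{(y)}}$ acting on $l$; here one uses the linearity of $\tilde{A}_{\bullet}$ and $B_{\bullet}$ in their vector-field argument and the fact that the $B$-contributions combine compatibly. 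Collecting everything, the compatibility condition becomes the single relation $(\tilde{A}_Z + B_Z)l = 0$, that is $ad^{\ast}_Z \tilde{l} = 0$, where $Z := \partial_y M^{(t)} - \partial_t M^{(y)} + [M^{(y)}, M^{(t)}]$ is precisely the loop vector field appearing on the left-hand side of (\ref{eq1.20}).

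Finally I would argue that $ad^{\ast}_Z \tilde{l} = 0$ forces $Z = 0$, which is exactly (\ref{eq1.20}) since $\tilde{A}_Z$ is identically the vector field $Z$. I expect this last step to be the main obstacle: a coadjoint relation $ad^{\ast}_Z \tilde{l} = 0$ asserts only that $Z$ lies in the stabilizer of $\tilde{l}$, not that $Z$ vanishes. I would resolve it using the holomorphic loop structure: since $Z$ is built from the $\mathbb{S}_{+}^{1}$-holomorphic parts $\nabla h^{(t)}_{+}$ and $\nabla h^{(y)}_{+}$, it has controlled, polynomial dependence on $\lambda$, and splitting $ad^{\ast}_Z \tilde{l} = 0$ into its $\mathbb{S}_{\pm}^{1}$ components, together with the nondegeneracy of the seed element $\tilde{l}$ inherited from the Jacobi--Mayer relation used earlier, forces each $\lambda$-component of $Z$ to vanish, and hence $Z = 0$ for all $\lambda \in \mathbb{C}$. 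The converse direction is immediate, since $Z = 0$ trivially yields $ad^{\ast}_Z \tilde{l} = 0$ and thereby restores the compatibility of the flows.
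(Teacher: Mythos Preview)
Your proposal is correct and follows the same logical arc as the paper: cross-differentiate the two flows, reduce the mixed-partials identity to the coadjoint relation $ad^{\ast}_{Z}\tilde{l}=0$ with $Z=\partial_y\nabla h^{(t)}_{+}-\partial_t\nabla h^{(y)}_{+}+[\nabla h^{(y)}_{+},\nabla h^{(t)}_{+}]$, and then argue that the $\lambda$-analytic structure forces $Z=0$. The only difference is packaging: the paper carries out the reduction by pairing $\partial_{t}\partial_{y}\tilde{l}-\partial_{y}\partial_{t}\tilde{l}$ against an arbitrary test field $\tilde{Z}\in\mathcal{\tilde G}$ and unwinding the Jacobi identity explicitly inside that pairing, whereas you encode the same Jacobi manipulation once in the Lie-derivative homomorphism $a\mapsto\tilde{A}_a+B_a$ and then cross-differentiate at the level of operators on $l$. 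Both routes land on the identical intermediate statement.

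For the final step the paper phrases $ad^{\ast}_{Z}\tilde{l}=0$ as saying $Z$ is the gradient of a Casimir and then invokes analyticity in $\lambda$ (citing \cite{BoDrMa}) to conclude $Z=0$; your $\mathbb{S}^1_{\pm}$ splitting argument is the same mechanism, since $Z$ is polynomial in $\lambda$ while nontrivial Casimir gradients for a generic seed are genuine asymptotic series. One small correction: your appeal to ``nondegeneracy inherited from the Jacobi--Mayer relation'' is misplaced --- that relation belongs to the Lax--Sato discussion in Section~\ref{Sec_1} and plays no role here; the relevant input is just the holomorphic loop structure of $\mathcal{\tilde G}$ and the form of the Casimir expansion (\ref{eq1.22}).
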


\begin{proof}
The compatibility of commuting flows (\ref{eq1.19}) implies that $\partial
^{2}l/\partial t\partial y-\partial^{2}l/\partial y\partial t=0$ \ for all
$y,t\in\mathbb{R}$ \ and arbitrary $\lambda\in\mathbb{C}.$ Taking into account
the expressions \ (\ref{eq1.18}), \ one has for any vector field $\tilde
{Z}=<Z,\frac{\partial}{\partial x}>\in\mathcal{\tilde{G}}$
\[%
\begin{array}
[c]{c}%
0=(\partial^{2}\tilde{l}/\partial t\partial y-\partial^{2}\tilde{l}/\partial
y\partial t,\tilde{Z})=-\frac{\partial}{\partial t}(ad_{\nabla h^{(y)}%
(\tilde{l})_{+}}^{\ast}\tilde{l},\tilde{Z})+\frac{\partial}{\partial
y}(ad_{\nabla h^{(t)}(\tilde{l})_{+}}^{\ast}\tilde{l},\tilde{Z})=\\
\\
=-\frac{\partial}{\partial t}(\ \tilde{l},[\nabla h^{(y)}(\tilde{l}%
)_{+},\tilde{Z}])+\frac{\partial}{\partial y}(\ \tilde{l},[\nabla
h^{(y)}(\tilde{l})_{+},\tilde{Z}])=\\
\\
=-(\ \frac{\partial}{\partial t}\tilde{l},[\nabla h^{(y)}(\tilde{l}%
)_{+},\tilde{Z}])-(\ \tilde{l},[\frac{\partial}{\partial t}\nabla
h^{(y)}(\tilde{l})_{+},\tilde{Z}])+\\
\\
+(\ \frac{\partial}{\partial y}\tilde{l},[\nabla h^{(t)}(\tilde{l})_{+}%
,\tilde{Z}])+(\ \tilde{l},[\frac{\partial}{\partial y}\nabla h^{(y)}(\tilde
{l})_{+},\tilde{Z}])=\\
\\
=(\ ad_{\nabla h^{(t)}(\tilde{l})_{+}}^{\ast}\tilde{l},[\nabla h^{(y)}%
(\tilde{l})_{+},\tilde{Z}])-(\ \tilde{l},[\frac{\partial}{\partial t}\nabla
h^{(y)}(\tilde{l})_{+},\tilde{Z}])-\\
\\
-(\ ad_{\nabla h^{(y)}(\tilde{l})_{+}}^{\ast}\tilde{l},[\nabla h^{(t)}%
(\tilde{l})_{+},\tilde{Z}])+(\ \tilde{l},[\frac{\partial}{\partial y}\nabla
h^{(y)}(\tilde{l})_{+},\tilde{Z}])=
\end{array}
\]%
\begin{equation}%
\begin{array}
[c]{c}%
\\
=(\ \tilde{l},[\nabla h^{(t)}(\tilde{l})_{+},[\nabla h^{(y)}(\tilde{l}%
)_{+},\tilde{Z}]])-(\ \tilde{l},[\frac{\partial}{\partial t}\nabla
h^{(y)}(\tilde{l})_{+},\tilde{Z}])-\\
\\
-(\tilde{l},[\nabla h^{(y)}(\tilde{l})_{+},[\nabla h^{(t)}(\tilde{l}%
)_{+},\tilde{Z}]])+(\ \tilde{l},[\frac{\partial}{\partial y}\nabla
h^{(t)}(\tilde{l})_{+},\tilde{Z}])=\\
\\
=(\ \tilde{l},[\nabla h^{(t)}(\tilde{l})_{+},[\nabla h^{(y)}(\tilde{l}%
)_{+},\tilde{Z}]]-[\nabla h^{(y)}(\tilde{l})_{+},[\nabla h^{(t)}(\tilde
{l})_{+},\tilde{Z}]])+\\
\\
+(\ \tilde{l},[\frac{\partial}{\partial y}\nabla h^{(t)}(\tilde{l})_{+}%
-\frac{\partial}{\partial t}\nabla h^{(y)}(\tilde{l})_{+},\tilde{Z}])=\\
\\
=(\ \tilde{l},[\nabla h^{(y)}(\tilde{l})_{+},\nabla h^{(t)}(\tilde{l}%
)_{+}]+\frac{\partial}{\partial y}\nabla h^{(t)}(\tilde{l})_{+}-\frac
{\partial}{\partial t}\nabla h^{(y)}(\tilde{l})_{+},\tilde{Z})=\\
\\
=(\tilde{l},[\partial\tilde{A}_{\nabla h_{+}^{(t)}}/\partial y-\partial
\tilde{A}_{\nabla h_{+}^{(y)}}/\partial t+[\tilde{A}_{\nabla h_{+}^{(y)}%
},\tilde{A}_{\nabla h_{+}^{(t)}}],Z])=(ad_{\nabla\tilde{H}}^{\ast}\tilde
{l},\tilde{Z}),
\end{array}
\label{eq1.21}%
\end{equation}
where
\begin{equation}
\ \nabla H(\tilde{l}):=\partial\tilde{A}_{\nabla h_{+}^{(t)}}/\partial
y-\partial\tilde{A}_{\nabla h_{+}^{(y)}}/\partial t+[\tilde{A}_{\nabla
h_{+}^{(y)}},\tilde{A}_{\nabla h_{+}^{(t)}}]. \label{eq1.21a}%
\end{equation}
From \ (\ref{eq1.21}) we obtain that $ad_{\nabla H(\tilde{l})}^{\ast}\tilde
{l}=0$ for all $y,t\in\mathbb{R}$ \ and arbitrary $\lambda\in\mathbb{C}$; that
i,s the vector field \ (\ref{eq1.21a}) is the gradient of an analytical
Casimir functional $H\in I(\mathcal{\tilde{G}}^{\ast}).$ Now based on the
analyticity of the vector field expression \ (\ref{eq1.21a}), one easily shows
\cite{BoDrMa} that $\nabla H(\tilde{l})=0,$ thus finishing the proof.
\end{proof}

For the exact representatives of the functions $h^{(y)},h^{(t)}\in
I(\mathcal{\tilde{G}}^{\ast})$, it is necessary to solve the determining
equation (\ref{eq1.12}), taking into account that if the chosen element
$\tilde{l}\in\mathcal{\tilde{G}}^{\ast}$ is singular as $|\lambda
|\rightarrow\infty,$ the related expansion
\begin{equation}
\nabla\gamma^{(p)}(l)\sim\lambda^{p}\sum\limits_{j\in\mathbb{Z}_{+}}%
\nabla\gamma_{j}(l)\lambda^{-j}, \label{eq1.22}%
\end{equation}
where the degree $p\in\mathbb{Z}_{+}$ can be taken as arbitrary. Upon
substituting (\ref{eq1.22}) into (\ref{eq1.12}) one can find recurrently all
the coefficients $\nabla\gamma_{j}(l),$ $j\in\mathbb{Z}_{+},$ and then
construct gradients of the Casimir functions $h^{(y)},h^{(t)}\in
I(\mathcal{\tilde{G}}^{\ast})$ reduced on $\mathcal{\tilde{G}}_{+\text{ \ }}%
$as
\begin{equation}
\nabla h_{+}^{(t)}(l)\ =(\lambda^{p_{t}}\nabla\gamma(l))|_{+},\ \ \ \ \ \nabla
h_{+}^{(y)}(l)=(\lambda^{p_{y}}\nabla\gamma(l))|_{+} \label{eq1.23}%
\end{equation}
for some positive integers $p_{y},p_{t}\in\mathbb{Z}_{+}.$

\begin{remark}
\label{Rem1.1} \textit{\ As mentioned above, the expansion (\ref{eq1.22}) is
effective if a chosen seed element } $\tilde{l}\in\mathcal{\tilde{G}}^{\ast}%
$\textit{\ is singular as} $|\lambda|\rightarrow\infty.$ \textit{In the case
when it is singular as} $|\lambda|\rightarrow0,$ \textit{the expression
(\ref{eq1.22}) should be replaced by the expansion}%
\begin{equation}
\nabla\gamma^{(p)}(l)\sim\lambda^{-p}\sum\limits_{j\in\mathbb{Z}_{+}}%
\nabla\gamma_{j}(l)\lambda^{j}, \label{eq1.22a}%
\end{equation}
\textit{for an arbitrary }$p\in\mathbb{Z}_{+},$ \ \textit{and the reduced
Casimir function gradients then are given by the expressions
\begin{equation}
\nabla h_{-}^{(y)}(l)=\lambda(\lambda^{-p_{y}-1}\nabla\gamma(l))|_{-}%
,\ \ \ \ \ \nabla h_{-}^{(t)}(l)=\lambda(\lambda^{-p_{t}-1}\nabla
\gamma(l))|_{-}\label{eq1.23a}%
\end{equation}
\textit{f}or some positive integers }$p_{y},p_{t}\in\mathbb{Z}_{+}$ $.$ Then
\textit{the corresponding flows are, respectively, written as
\begin{equation}
\partial\tilde{l}/\partial t=ad_{\triangledown h_{-}^{(t)}(\tilde{l})}^{\ast
}\tilde{l},\text{ \ \ }\partial\tilde{l}/\partial y=ad_{\triangledown
h_{-}^{(y)}(\tilde{l})}^{\ast}\tilde{l}.\label{eq1.18a}%
\end{equation}
}
\end{remark}

The above results, owing to \ \ref{Lm_1.1}, can be formulated as the following
main proposition.

\begin{proposition}
\label{prop1.1} Let a seed vector field $\tilde{l}\in\mathcal{\tilde{G}}%
^{\ast}$ and $h^{(y)},h^{(t)}\in I(\mathcal{\tilde{G}}^{\ast})$ be Casimir
functions subject to the metric $(\cdot,\cdot)$ on the loop Lie algebra
$\mathcal{\tilde{G}}$ \ and the natural coadjoint action on the loop
co-algebra $\mathcal{\tilde{G}}^{\ast}.$ Then the following dynamical systems
\begin{equation}
\partial\tilde{l}/\partial y=-ad_{\nabla h_{+}^{(y)}(\tilde{l})}^{\ast}%
\tilde{l},\text{ \ \ }\partial\tilde{l}/\partial t=-ad_{\nabla h_{+}%
^{(t)}(\tilde{l})}^{\ast}\tilde{l} \label{eq1.23b}%
\end{equation}
are commuting Hamiltonian flows for all $y,t\in\mathbb{R}.$ \ Moreover, the
compatibility condition of these flows is equivalent to the so called vector
fields representation
\begin{equation}
(\partial/\partial t+\tilde{A}_{\nabla h_{+}^{(t)}})\psi=0,\ \ \ \ \ (\partial
/\partial y+\tilde{A}_{\nabla h_{+}^{(y)}})\psi=0, \label{eq1.23c}%
\end{equation}
where $\psi\in C^{\infty}(\mathbb{R}^{2}\times\mathbb{T}^{n};\mathbb{C})$ and
the vector fields $\tilde{A}_{\nabla h_{+}^{(y)}},\tilde{A}_{\nabla
h_{+}^{(t)}}\in\mathcal{\tilde{G}},$ \ given by the expressions (\ref{eq1.19a}%
) and (\ref{eq1.23}), satisfy the Lax relationship \ (\ref{eq1.20}).
\end{proposition}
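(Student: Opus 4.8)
The plan is to assemble Proposition~\ref{prop1.1} from the pieces already laid out, treating it essentially as a clean restatement and consolidation of the $\mathcal{R}$-structure machinery developed above. The claim has three assertions: (i) that the flows~(\ref{eq1.23b}) are Hamiltonian, (ii) that they commute, and (iii) that their compatibility is equivalent to the vector-field representation~(\ref{eq1.23c}) with the Lax relationship~(\ref{eq1.20}). I would address these in turn, since the analytic heavy lifting has been relegated to Lemma~\ref{Lm_1.1} and to the general theory of the canonical Lie--Poisson bracket~(\ref{eq1.8}) and its associated second bracket~(\ref{eq1.13}).

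\emph{First}, I would establish that the systems~(\ref{eq1.23b}) are Hamiltonian flows with respect to the $\mathcal{R}$-deformed Lie--Poisson bracket. This follows from the standard AKS reduction: for a Casimir $h\in I(\mathcal{\tilde{G}}^{\ast})$ the $\mathcal{R}$-bracket flow $\{h,\tilde{l}\}_{\mathcal{R}}$ collapses, via the identity $[\nabla h,\,\cdot\,]_{\mathcal{R}}=[\mathcal{R}\nabla h,\,\cdot\,]+[\nabla h,\mathcal{R}\,\cdot\,]$ together with $ad^{\ast}_{\nabla h(\tilde{l})}\tilde{l}=0$, to the projected coadjoint action $-ad^{\ast}_{\nabla h(\tilde{l})_{+}}\tilde{l}$; this is exactly the passage already recorded in~(\ref{eq1.18}). \emph{Second}, commutativity of the two flows is immediate from~(\ref{eq1.13a}): since $h^{(y)},h^{(t)}\in I(\mathcal{\tilde{G}}^{\ast})$ are both Casimirs, their second Poisson bracket $\{h^{(y)},h^{(t)}\}_{\mathcal{R}}$ vanishes at every seed element, and by the general involutivity theorem for $\mathcal{R}$-structures the corresponding Hamiltonian flows commute. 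I would cite~(\ref{eq1.13a}) directly here rather than rederive it.

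\emph{Third}, for the equivalence with the vector-field representation, I would use the reduction of the coadjoint action recorded in~(\ref{eq1.15})--(\ref{eq1.16}), which rewrites each flow~(\ref{eq1.18}) in the explicit component form~(\ref{eq1.19}) involving the genuine vector fields $\tilde{A}_{\nabla h_{+}^{(t)}}$, $\tilde{A}_{\nabla h_{+}^{(y)}}$ of~(\ref{eq1.19a}) and the matrix homomorphisms $B$. The compatibility condition $\partial^{2}l/\partial t\partial y=\partial^{2}l/\partial y\partial t$ is then, by Lemma~\ref{Lm_1.1}, precisely equivalent to the vanishing Lax relationship~(\ref{eq1.20}) for the vector fields $\tilde{A}_{\nabla h_{+}^{(y)}},\tilde{A}_{\nabla h_{+}^{(t)}}$. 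Finally, I would observe that~(\ref{eq1.20}) is exactly the zero-curvature (flatness) condition for the pair of linear operators $\partial/\partial t+\tilde{A}_{\nabla h_{+}^{(t)}}$ and $\partial/\partial y+\tilde{A}_{\nabla h_{+}^{(y)}}$, so that their joint kernel~(\ref{eq1.23c}) is consistent if and only if~(\ref{eq1.20}) holds; this is the same commutator-of-vector-fields computation as in~(\ref{V16}).

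\emph{The main obstacle} I anticipate is not conceptual but bookkeeping: carefully tracking the passage from the abstract coadjoint form $-ad^{\ast}_{\nabla h_{+}}\tilde{l}$ to the split operator form $-(\tilde{A}_{\nabla h_{+}}+B_{\nabla h_{+}})l$, and in particular verifying that the extra $B$-terms (the matrix homomorphisms in~(\ref{eq1.17})) cancel consistently so that the \emph{vector-field} part alone governs the Lax relationship~(\ref{eq1.20}). The subtle point, already flagged at the end of the proof of Lemma~\ref{Lm_1.1}, is the analyticity argument showing that the Casimir gradient $\nabla H(\tilde{l})$ built from the flat connection must vanish identically rather than merely annihilate $\tilde{l}$ under the coadjoint action; I would lean on that lemma wholesale rather than reprove it. The remaining verification that~(\ref{eq1.23c}) reproduces~(\ref{eq1.20}) upon cross-differentiation is routine and parallels the earlier derivation of~(\ref{V16}).
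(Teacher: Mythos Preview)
Your proposal is correct and follows essentially the same approach as the paper: the proposition is treated there as a direct consolidation of the preceding material, introduced only with the sentence ``The above results, owing to Lemma~\ref{Lm_1.1}, can be formulated as the following main proposition,'' with no separate proof given. Your write-up simply makes explicit the three ingredients the paper has already assembled --- the Hamiltonian form~(\ref{eq1.18}), the involutivity~(\ref{eq1.13a}), and the Lax relationship from Lemma~\ref{Lm_1.1} --- which is precisely what the paper intends.
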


The proposition above makes it possible to describe in a very effective way
the B\"{a}cklund transformations \ between two solution sets to
the\ dispersionless heavenly type equations resulting from the Lax
compatibility condition \ (\ref{eq1.20}). Namely, let a diffeomorphism $\xi
\in\widetilde{Diff}(\mathbb{T}^{n}),$ depending parametrically on $\lambda
,\mu\in\mathbb{C}$ and evolution variables $(y,t)\in\mathbb{R}^{2},$ be such
that a seed loop differential form $\tilde{l}(x;\lambda,\mu)\in\mathcal{\tilde
{G}}^{\ast}\simeq\tilde{\Lambda}^{1}(\mathbb{T}^{n})$ satisfies the invariance
condition
\begin{equation}
\ \tilde{l}(\xi(x;\lambda,\mu);\lambda)=k\tilde{l}(x;\mu) \label{eq1.23d}%
\end{equation}
for\ some\ non-zero constant $k\in\mathbb{C}\backslash\{0\},$ any
$x\in\mathbb{T}^{n}\ $and arbitrarily chosen $\lambda\in\mathbb{C}.$ As the
seed element $\tilde{l}(\xi(x;\lambda,\mu);\lambda)\ \in\tilde{\Lambda}%
^{1}(\mathbb{T}^{n}),$ by the construction, simultaneously satisfies the
system of compatible equations \ following from (\ref{eq1.23b}), the loop
diffeomorphism $\xi\in\widetilde{Diff}(\mathbb{T}^{n}),$ found analytically
from the invariance condition \ (\ref{eq1.23d}), should satisfiy the
relationships%
\begin{equation}
\frac{\partial}{\partial y}\xi=\nabla h_{+}^{(y)}(l),\text{ \ \ }%
\frac{\partial}{\partial t}\xi=\nabla h_{+}^{(t)}(l), \label{eq1.23e}%
\end{equation}
giving rise exactly to the B\"{a}cklund type relationships for coefficients of
the seed\ loop differential form $\tilde{l}\in\mathcal{\tilde{G}}^{\ast}%
\simeq\tilde{\Lambda}^{1}(\mathbb{T}^{n}).$

\section{\label{Sec_3}Integrable heavenly equations: Examples}

\subsection{ The Mikhalev--Pavlov heavenly equation}

This equation \cite{Mikh,Pavl} is
\begin{equation}
u_{xt}+u_{yy}=u_{y}u_{xx}-u_{x}u_{xy}, \label{eq1.24}%
\end{equation}
where $u\in C^{\infty}(\mathbb{R}^{2}\times\mathbb{T}^{1};\mathbb{R})$ and
$(t,y;x)\in\mathbb{R}^{2}\times\mathbb{T}^{1}.$ Set $\mathcal{\tilde{G}}%
^{\ast}:=\widetilde{diff}^{\ast}(\mathbb{T}^{1})$ and take the corresponding
seed element $\ \tilde{l}\in\mathcal{\tilde{G}}^{\ast}$ as
\begin{equation}
\tilde{l}=(\lambda-2u_{x})dx\ . \label{eq1.25}%
\end{equation}
It generates a Casimir invariant $h\in I(\mathcal{\tilde{G}}^{\ast})$ for
which the expansion (\ref{eq1.22}) as $|\lambda|\rightarrow\infty$ \ is given
by the asymptotic series
\begin{equation}
\nabla h(l)\sim1+u_{x}/\lambda-u_{y}/\lambda^{2}+O(1/\lambda^{3})
\label{eq1.26}%
\end{equation}
and so on. If further one defines
\begin{align}
&  \nabla h^{(t)}(l)_{+}:=(\lambda^{2}\nabla h)_{+}=\lambda^{2}+\lambda
u_{x}-u_{y},\label{eq1.27}\\
& \nonumber\\
&  \nabla h^{(y)}(l)_{+}:=(\lambda^{1}\nabla h)_{+}=\lambda+u_{x},\nonumber
\end{align}
it is easy to verify that
\begin{align}
\tilde{A}_{\nabla h_{+}^{(t)}}  &  :=<\nabla h^{(t)}(l)_{+},\frac{\partial
}{\partial x}>=(\lambda^{2}+\lambda u_{x}-u_{y})\frac{\partial}{\partial
x},\ \ \ \ \ \nonumber\\
& \label{eq1.27a}\\
\tilde{A}_{\nabla h_{+}^{(y)}}  &  :=<\nabla h^{(y)}(l)_{+},\frac{\partial
}{\partial x}>=(\lambda+u_{x})\frac{\partial}{\partial x}.\nonumber
\end{align}
As a result of \ (\ref{eq1.27a}) and the commuting flows \ (\ref{eq1.23b}) on
$\mathcal{\tilde{G}}^{\ast}$ we \ retrieve (the equivalent to the
Mikhalev--Pavlov \cite{Pavl} )\ equation (\ref{eq1.24}) vector field
compatibility relationships
\begin{equation}
\frac{\partial\psi}{\partial t}+(\lambda^{2}+\lambda u_{x}-u_{y}%
)\frac{\partial\psi}{\partial x}=0=\frac{\partial\psi}{\partial y}%
+(\lambda+u_{x})\frac{\partial\psi}{\partial x}\ , \label{eq1.28}%
\end{equation}
satisfied for $\psi\in C^{\infty}(\mathbb{R}^{2}\times\mathbb{T}%
^{1};\mathbb{C}),$ any $(y,t;x)\in\mathbb{R}^{2}\mathbb{\times T}^{1}$ and all
$\lambda\in\mathbb{C}.$

We now study the B\"{a}cklund transformation for two special solutions
$\ u,\tilde{u}\in C^{2}(\mathbb{T}^{1}\times\mathbb{R}^{2};\mathbb{R})$ to the
Mikhalev--Pavlov equation \ (\ref{eq1.24}). Let us consider a loop
diffeomorphism $\xi\in\widetilde{Diff}(\mathbb{T}^{1})$ that is the mapping
$\mathbb{T}^{1}\ni x\rightarrow\tilde{x}=\xi(x;y,t,\lambda)\in\mathbb{T}^{1},$
which parametrically depends on $\lambda\in\mathbb{C}^{1}$ and the evolution
variables $(y,t)\in\mathbb{R}^{2},$ satisfying the invariance condition
\ (\ref{eq1.23d}) for the seed loop differential form \ (\ref{eq1.25}):
\begin{equation}
\ (\lambda-2\tilde{u}_{\tilde{x}}(\tilde{x};t,y))d\tilde{x}=(\lambda
-2u_{x}(x;t,y))dx, \label{eq1.28a}%
\end{equation}
where for simplicity, we define $\mu=\lambda\in\mathbb{C}$ and the constant
parameter $k=1.$ From \ (\ref{eq1.28a}) one easily finds that
\[
\lambda\xi_{x}(x;t,y)=2[\tilde{u}(\tilde{x};t,y)-u(x;t,y)]_{x}+\lambda,
\]
or, equivalently,
\begin{equation}
\xi(x;\lambda)=x+2(\tilde{u}-u)/\lambda+\alpha(y,t;\lambda), \label{eq1.28b}%
\end{equation}
where $\ \ \lambda\in\mathbb{C}\backslash\{0\}$ and $\ \alpha\in
C^{2}(\mathbb{R}^{2};\mathbb{R})$ is some mapping. \ As the loop
diffeomorphism \ (\ref{eq1.28b}) should simultaneously satisfy the vector
field equations (\ref{eq1.23e}), giving rise (at $\alpha(y,t;\lambda)=0)$ to
the following B\"{a}cklund type differential relationships
\begin{equation}
\ 2(\tilde{u}_{y}-u_{y})/\lambda=\lambda+\tilde{u}_{\tilde{x}},\text{
\ \ \ \ \ }2(\tilde{u}_{t}-u_{t})/\lambda=\lambda^{2}+\lambda\tilde{u}%
_{\tilde{x}}-\tilde{u}_{y},\nonumber
\end{equation}
which hold for any $\lambda\in\mathbb{C}$ and two special solutions
$u,\tilde{u}\in C^{2}(\mathbb{T}^{1}\times\mathbb{R}^{2};\mathbb{R})$ to the
Mikhalev--Pavlov equation \ (\ref{eq1.24}).

\subsection{Example: The Witham heavenly type equation}

Consider the following \cite{FeMoSo,Pavl-2,Moro,Kric} heavenly type equation:%
\begin{equation}
u_{ty}=u_{x}u_{xy}-u_{y}u_{xx}, \label{E1}%
\end{equation}
where $u\in C^{2}(\mathbb{R}^{2}\times\mathbb{T}^{1};\mathbb{R})$ and
$(t,y;x)\in\mathbb{R}^{2}\times\mathbb{T}^{1}.$ To prove the Lax-Sato type
integrability of\ (\ref{E1}), let us consider a seed element $\tilde{l}%
\in\tilde{\mathcal{G}}^{\ast},$ defined as
\begin{equation}
\tilde{l}=(u_{y}^{-2}\lambda^{-1}+2u_{x}+\lambda)dx, \label{E2}%
\end{equation}
where $\lambda\in\mathbb{C}\backslash\{0\}$ is a complex parameter. The
following asymptotic expressions are gradients of the Casimir functionals
$h^{(t)},h^{(y)}\in I(\tilde{\mathcal{G}}^{\ast}),$ related with the
holomorphic loop Lie algebra $\tilde{\mathcal{G}}=\widetilde{diff}%
(\mathbb{T}^{1}):$%
\begin{equation}
\nabla h^{(t)}\sim\lambda\lbrack(u_{x}\lambda^{-1}-1)+O(1/\lambda), \label{E3}%
\end{equation}
as $\lambda\rightarrow\infty,$ and
\begin{equation}
\nabla h^{(y)}\sim u_{y}\lambda^{-1}+O(\lambda^{2}), \label{E4}%
\end{equation}
as $\lambda\rightarrow0.$ Based on the expressions (\ref{E3}) and
\ (\ref{E4}), one can construct \cite{PrPr} the following commuting to each
other Hamiltonian flows
\begin{equation}
\frac{\partial}{\partial y}\tilde{l}=-ad_{\nabla h_{-}^{(y)}}^{\ast}\tilde
{l},\text{ \ \ }\frac{\partial}{\partial t}\tilde{l}=-ad_{\nabla h_{+}^{(t)}%
}^{\ast}\tilde{l}\text{\ }\ \text{\ \ \ \ \ \ \ \ \ \ \ \ \ \ \ \ \ \ \ }
\label{E5}%
\end{equation}
with respect to the evolution parameters $y,t\in\mathbb{R},$ which give rise,
in part, to the following equations:%
\begin{align}
&  u_{yt}=u_{x}u_{xy}-u_{y}u_{xx},\label{E6}\\
& \nonumber\\
&  \ \ u_{t}=-u_{y}^{-2}/2+3u_{x}^{2}/2,\nonumber\\
& \nonumber\\
&  u_{yy}=-u_{y}^{3}[(u_{x}u_{y})_{x}+u_{x}u_{xy}],\nonumber
\end{align}
where the projected gradients $\ \nabla h_{-}^{(y)},\nabla h_{+}^{(t)}%
\in\tilde{\mathcal{G}}$ are equal to the loop vector fields
\begin{equation}
\nabla h_{+}^{(t)}=(u_{x}\ -\lambda)\frac{\partial}{\partial x},\text{
\ \ }\nabla h_{-}^{(y)}=\frac{u_{y}}{\lambda}\frac{\partial}{\partial
x},\ \label{E7}%
\end{equation}
satisfying for evolution parameters $y,t\in\mathbb{R}^{2}$ the Lax-Sato vector
field compatibility condition:%
\begin{equation}
\frac{\partial}{\partial y}\nabla h_{+}^{(t)}-\frac{\partial}{\partial
t}\nabla h_{-}^{(y)}+[\nabla h_{+}^{(t)},\nabla h_{-}^{(y)}]=0. \label{E8}%
\end{equation}
As a simple consequence of the condition one finds exactly the first equation
of the \ (\ref{E6}), coinciding with the heavenly type equation (\ref{E1}).
Thereby, we have stated that this equation is a completely integrable heavenly
type dynamical system with respect to both evolution parameters.

\begin{remark}
It is worth to observe that the third equation of \ (\ref{E6}) entails the
interesting relationship%
\begin{equation}
\frac{\partial}{\partial y}(1/u_{y})=\frac{\partial}{\partial x}(u_{x}%
u_{y}^{2}), \label{E9}%
\end{equation}
whose compatibility makes it possible to introduce a new function $v\in
C^{2}(\mathbb{S}^{1};\mathbb{R}),$ satisfying the next differential
expressions:%
\begin{equation}
v_{x}=1/u_{y},\text{ \ \ \ \ }v_{y}=u_{x}u_{y}^{2}, \label{E10}%
\end{equation}
which hold for all $(x,y)\in\mathbb{S}^{1}\times\mathbb{R}.$ Based on
\ (\ref{E10}) the seed element \ (\ref{E2}) is rewritten as
\begin{equation}
\tilde{l}=(v_{x}^{2}\lambda^{-1}+2u_{x}+\lambda)dx, \label{E11}%
\end{equation}
and the vector fields \ (\ref{E7}) are rewritten as
\begin{equation}
\nabla h_{+}^{(t)}=(u_{x}\ -\lambda)\frac{\partial}{\partial x},\text{
\ \ }\nabla h_{-}^{(y)}=\frac{1}{v_{x}\lambda}\frac{\partial}{\partial
x},\ \label{E12}%
\end{equation}
whose compatibility condition \ (\ref{E8}) gives rise to the following system
of heavenly type nonliner integrable flows:%
\begin{align}
v_{y}  &  =u_{x}v_{x}^{-2},\text{ \ }v_{xt}=u_{x}v_{xy}+u_{xx}v_{x,}%
\label{E13}\\
& \nonumber\\
u_{y}  &  =1/v_{x},\text{ \ \ }u_{t}=-v_{x}^{2}/2+3u_{x}^{2}/2,\text{\ }%
\ \nonumber
\end{align}
compatible for arbitrary evolution parameters $y,t\in\mathbb{R}.$
\end{remark}

\subsection{ Pleba\'{n}ski heavenly equation}

This equation \cite{Pleb} is
\begin{equation}
u_{tx_{1}}-u_{yx_{2}}+u_{x_{1}x_{1}}u_{x_{2}x_{2}}-u_{x_{1}x_{2}}^{2}=0
\label{eq1.29}%
\end{equation}
for a function $u\in C^{\infty}(\mathbb{R}^{2};\mathbb{T}^{2}),$ where
$(y,t;x_{1},x_{2})\in\mathbb{R}^{2}\times\mathbb{T}^{2}.$ We set
$\ \mathcal{\tilde{G}}^{\ast}:=\widetilde{diff}^{\ast}(\mathbb{T}^{2})$ and
take the corresponding seed element $\ \tilde{l}\in\mathcal{\tilde{G}}^{\ast}$
as
\begin{equation}
\tilde{l}=(\lambda-u_{x_{1}x_{2}}+u_{x_{1}x_{1}})dx_{1}+(\lambda-u_{x_{2}%
x_{2}}+u_{x_{1}x_{2}})dx_{2}. \label{eq1.30}%
\end{equation}
This generates two independent Casimir functionals $h^{(1)},h^{(2)}\in
I(\mathcal{\tilde{G}}^{\ast}),$ whose gradient expansions (\ref{eq1.22}) as
$|\lambda|\rightarrow\infty$ are given by the expressions
\begin{align}
\nabla h^{(1)}(l)  &  \sim(0,1)^{\intercal}+(u_{x_{2}x_{2}},-u_{x_{1}x_{2}%
})^{\intercal}\lambda^{-1}+O(\lambda^{-2}),\label{eq1.31}\\
& \nonumber\\
\ \ \nabla h^{(2)}(l)  &  \sim(1,0)^{\intercal}+(u_{x_{1}x_{2}},-u_{x_{1}%
x_{2}})^{\intercal}\lambda^{-1}+O(\lambda^{-2}),\nonumber
\end{align}
and so on. Now, by defining
\begin{align}
&  \nabla h^{(y)}(l)_{+}:=(\lambda\nabla h^{(1)}(l))_{+}=(u_{x_{2}x_{2}%
},\lambda-u_{x_{1}x_{2}})^{\intercal},\label{eq1.32}\\
& \nonumber\\
&  \nabla h^{(t)}(l)_{+}:=(\lambda\nabla h^{(2)}(l))_{+}=(\lambda
+u_{x_{1}x_{2}},-u_{x_{1}x_{1}})^{\intercal},\nonumber
\end{align}
one obtains for \ (\ref{eq1.29}) the following \cite{Pleb} vector field
representation
\begin{align}
&  \frac{\partial\psi}{\partial t}+u_{x_{1}x_{1}}\frac{\partial\psi}{\partial
x_{1}}+(\lambda-u_{x_{1}x_{2}})\frac{\partial\psi}{\partial x_{2}%
}=0,\label{eq1.33}\\
& \nonumber\\
&  \frac{\partial\psi}{\partial z}+(\lambda+u_{x_{1}x_{2}})\frac{\partial\psi
}{\partial x_{1}}-u_{x_{1}x_{1}}\frac{\partial\psi}{\partial x_{2}%
}=0,\nonumber
\end{align}
satisfied for $\psi\in C^{\infty}(\mathbb{R}^{2}\times\mathbb{T}%
^{2};\mathbb{C}),$ any $(t,y;x_{1},x_{2})\in\mathbb{R}^{2}\times\mathbb{T}%
^{2}$ and all $\lambda\in\mathbb{C}.$

\subsection{General heavenly equation}

This equation was first suggested and analyzed by Schief in \cite{Schi,Schi-1}%
, where it was shown to be equivalent to the first Pleba\'{n}ski heavenly
equation, and later studied by Doubrov and Ferapontov \cite{DoFe}; it has the
form
\begin{equation}
\alpha u_{yt}u_{x_{1}x_{2}}+\beta u_{tx_{2}}u_{yx_{1}}+\gamma u_{tx_{1}%
}u_{yx_{2}}=0, \label{eq2.31}%
\end{equation}
where $\alpha,\beta$ and $\gamma\in\mathbb{R}$ are arbitrary constants,
satisfying the constraint
\begin{equation}
\alpha+\beta+\gamma=0, \label{eq2.32}%
\end{equation}
and $t,y\in\mathbb{R},(x_{1},x_{2})\in\mathbb{T}^{2}.$ To demonstrate the Lax
integrability of the equation (\ref{eq2.31}) we choose now a seed vector field
$\tilde{l}\in\mathcal{\tilde{G}}^{\ast}:=\widetilde{diff}^{\ast}%
(\mathbb{T}^{2})$ in the following rational form%
\begin{align}
\tilde{l}  &  =\ \left(  \frac{\mu u_{x_{1}x_{2}}^{2}}{\gamma(\mu+\beta
)\ }+\frac{u_{x_{1}x_{2}}^{2}}{\alpha}-\frac{\mu u_{x_{1}x_{2}}^{2}}%
{\ \beta(\mu-\gamma)}\right)  \ dx_{1}+\label{eq2.33}\\
& \nonumber\\
&  +\ \left(  \frac{\mu u_{x_{1}x_{2}}u_{x_{2}x_{2}}}{\gamma(\mu+\beta
)\ }+\frac{u_{x_{1}x_{2}}u_{x_{2}x_{2}}}{\alpha}-\frac{\mu u_{x_{1}x_{2}%
}u_{x_{2}x_{2}}}{\ \beta(\mu-\gamma)}\right)  dx_{2},\nonumber
\end{align}
where $a_{j},b_{j}\in C^{\infty}(\mathbb{T}^{2};\mathbb{R}),$ $j=\overline
{0,1},$ are smooth functions and $\mu\in\mathbb{C}$ is a complex parameter$.$
\ The corresponding equations for independent Casimir invariants $\gamma
^{(j)}\in I(\mathcal{\tilde{G}}^{\ast}),j=\overline{1,2},$ are given with
respect to the standard metric $(\cdot,\cdot)$ \ by the following asymptotic
expansions:
\begin{equation}
\nabla\gamma^{(1)}(l)\sim\sum\limits_{j\in\mathbb{Z}_{+}}\nabla\gamma
_{j}^{(1)}(l)\lambda^{j}, \label{eq2.34}%
\end{equation}
as $\ \mu+\beta=\lambda\ \rightarrow0\ $ and
\begin{equation}
\nabla\gamma^{(2)}(l)\sim\sum\limits_{j\in\mathbb{Z}_{+}}\nabla\gamma
_{j}^{(2)}(l)\lambda^{j}, \label{eq2.35}%
\end{equation}
as $\ \mu-\gamma=\lambda\rightarrow0.$ For the first case (\ref{eq2.34}) one
obtains that
\begin{equation}
\nabla\gamma^{(1)}(l)\sim\ \left(  -\frac{\beta u_{tx_{2}}}{u_{x_{1}x_{2}}%
}+\frac{u_{tx_{2}}}{u_{x_{1}x_{2}}}\lambda,\frac{\beta u_{x_{1}x_{1}}%
}{u_{x_{1}x_{2}}}\right)  ^{\intercal}+O(\lambda^{2}) \label{eq2.36}%
\end{equation}
and for the second one (\ref{eq2.35}) one finds that
\begin{equation}
\nabla\gamma^{(2)}(l)\sim\ \left(  \frac{\gamma u_{yx_{2}}}{u_{x_{1}x_{2}}%
}+\frac{u_{yx_{2}}}{u_{x_{1}x_{2}}}\lambda,-\frac{\gamma u_{x_{1}x_{1}}%
}{u_{x_{1}x_{2}}}\right)  ^{\intercal}+O(\lambda^{2}). \label{eq2.37}%
\end{equation}
Here we took into account that the following two Hamiltonian flows on
$\mathcal{\tilde{G}}^{\ast}$
\begin{equation}
\partial\tilde{l}/\partial y=ad_{\nabla h_{-}^{(y)}(\tilde{l})}^{\ast}%
\tilde{l},\text{ \ \ \ }\partial\tilde{l}/\partial t=ad_{\nabla h_{-}%
^{(t)}(\tilde{l})}^{\ast}\tilde{l}\ \label{eq2.38}%
\end{equation}
with respect to the evolution parameters $y,t\in\mathbb{R}$ hold for the
following conservation laws gradients: $\ $
\begin{align}
&  \nabla h_{-}^{(t)}(l):=\left.  \lambda(\lambda^{-2}\nabla\gamma
^{(1)}(l))_{-}\right\vert _{\lambda=\mu+\beta}=\left(  \frac{\mu u_{tx_{2}}%
}{u_{x_{1}x_{2}}(\mu+\beta)},\frac{\beta u_{tx_{1}}}{u_{x_{1}x_{2}}(\mu
+\beta)}\right)  ^{\intercal},\label{eq2.39}\\
& \nonumber\\
&  \nabla h_{-}^{(y)}(l):=\left.  \lambda(\lambda^{-2}\nabla\gamma
^{(2)}(l))_{-}\right\vert _{\lambda=\mu-\gamma}=\left(  \frac{\mu u_{yx_{2}}%
}{u_{x_{1}x_{2}}(\mu-\gamma)},-\frac{\gamma u_{yx_{1}}}{u_{x_{1}x_{2}}%
(\mu-\gamma)}\right)  ^{\intercal}.\ \nonumber
\end{align}
Owing to the compatibility condition of two commuting flows (\ref{eq2.39}),
one can easily rewrite it as the Lax relationship
\begin{equation}
\partial\tilde{A}^{(y)}/\partial t-\partial\tilde{A}^{(t)}/\partial
y=[\tilde{A}^{(y)},\tilde{A}^{(t)}], \label{eq2.41}%
\end{equation}
where
\begin{align}
&  \tilde{A}^{(t)}:=\ \left\langle \nabla h_{-}^{(t)}(l),\frac{\partial
}{\partial x}\right\rangle =\frac{\mu u_{tx_{2}}}{u_{x_{1}x_{2}}(\mu+\beta
)}\frac{\partial}{\partial x_{1}}+\frac{\beta u_{tx_{1}}}{u_{x_{1}x_{2}}%
(\mu+\beta)}\frac{\partial}{\partial x_{2}},\label{eq2.42}\\
& \nonumber\\
\tilde{A}^{(y)}  &  :=\ \left\langle h_{-}^{(t)}(l),\frac{\partial}{\partial
x}\right\rangle =\frac{\mu u_{yx_{2}}}{u_{x_{1}x_{2}}(\mu-\gamma)}%
\frac{\partial}{\partial x_{1}}-\frac{\gamma u_{yx_{1}}}{u_{x_{1}x_{2}}%
(\mu-\gamma)}\frac{\partial}{\partial x_{2}}.\nonumber
\end{align}
An easy calculation shows that the general heavenly equation (\ref{eq2.31})
follows from the compatibility condition (\ref{eq2.41}), whose equivalent
vector field representation is given as
\begin{align}
&  \frac{\mu u_{tx_{2}}}{u_{x_{1}x_{2}}(\mu+\beta)}\frac{\partial\psi
}{\partial x_{1}}+\frac{\beta u_{tx_{1}}}{u_{x_{1}x_{2}}(\mu+\beta)}%
\frac{\partial\psi}{\partial x_{2}}+\frac{\partial\psi}{\partial
t}=0,\label{eq2.43}\\
& \nonumber\\
&  \frac{\mu u_{yx_{2}}}{u_{x_{1}x_{2}}(\mu-\gamma)}\frac{\partial\psi
}{\partial x_{1}}-\frac{\gamma u_{yx_{1}}}{u_{x_{1}x_{2}}(\mu-\gamma)}%
\frac{\partial\psi}{\partial x_{2}}+\frac{\partial\psi}{\partial y}=0\nonumber
\end{align}
for a function $\psi\in C^{\infty}(\mathbb{R}^{2}\times\mathbb{T}%
^{2};\mathbb{C})$ for all $(y,t;x_{1},x_{2})\in\mathbb{R}^{2}\times
\mathbb{T}^{2}.$

We mention here that the related Backlund transformation for the general
heavenly equation \ (\ref{eq2.31}) was recently constructed both in
\ \cite{Serg} and in \ \cite{ShMaYa}, and can be retrieved from the
differential one-form \ (\ref{eq2.33}).

\subsection{ The Alonso--Shabat heavenly equation}

This equation \cite{AlSh} has the form%

\begin{equation}
u_{yx_{2}}-u_{t}u_{yx_{1}}+u_{y}u_{tx_{1}}=0, \label{eq1.34}%
\end{equation}
where $u\in C^{\infty}(\mathbb{R}^{2}\times\mathbb{T}^{2};\mathbb{R}%
),(y,t)\in\mathbb{R}^{2}$ and $(x_{1},x_{2})\in\mathbb{T}^{2}.$ To prove its
Lax integrability, we define a seed element $\tilde{l}\in\mathcal{\tilde{G}%
}^{\ast}:=\widetilde{diff}^{\ast}(\mathbb{T}^{2})$ of the form
\begin{equation}
\tilde{l}=\ \ z_{x_{1}}^{2}(\lambda+1)dx_{1}+z_{x_{1}}z_{x_{2}}(\lambda
+1)dx_{2}, \label{eq1.35}%
\end{equation}
for a fixed function $\ z\in C^{\infty}(\mathbb{T}^{2};\mathbb{R}).$ \ Then
one easily obtains asymptotic expansionsas $|\lambda|\rightarrow\infty$ for
coefficients of the two independent Casimir functionals $\gamma^{(j)}\in
I(\mathcal{\tilde{G}}^{\ast}),j=1,2,$ gradients:%
\begin{align}
\nabla\gamma^{(1)}(l)  &  \sim(1/z_{x_{1}}+kz_{x_{2}}/z_{x_{1}},-k)^{\intercal
}+O(1/\lambda^{2}),\text{ \ }\label{eq1.36}\\
& \nonumber\\
\nabla\gamma^{(2)}(l)  &  \sim(z_{x_{2}}/z_{x_{1}},-1)^{\intercal}%
+O(1/\lambda^{2}),\nonumber
\end{align}
where $k\neq1$ is a constant.\ Using the Casimir functionals \ (\ref{eq1.36}),
one can construct the simplest two commuting flows
\begin{equation}
\partial\tilde{l}/\partial y=-ad_{\nabla h^{(y)}(\tilde{l})_{+}}^{\ast}%
\tilde{l},\text{\ \ }\partial\tilde{l}/\partial t=-ad_{\nabla h^{(y)}%
(\tilde{l})_{+}}^{\ast}\tilde{l}\ \text{\ \ } \label{eq1.37}%
\end{equation}
with respect to the evolution parameters $y,t\in\mathbb{R},$ where
\begin{align}
\nabla h^{(y)}(l)_{+}  &  :=(\lambda\nabla\gamma^{(1)}(l))_{+}=(\lambda
/z_{x_{1}}+\lambda kz_{x_{2}}/z_{x_{1}},-\lambda k)^{\intercal}:=(\lambda
u_{y},-\lambda k)^{\intercal},\nonumber\\
& \nonumber\\
\text{ \ \ }\nabla h^{(t)}(l)_{+}  &  :=(\lambda\nabla\gamma^{(2)}%
(l))_{+}=(\lambda z_{x_{2}}/z_{x_{1}},-\lambda)^{\intercal}:=(\lambda
u_{t},-\lambda)^{\intercal} \label{eq1.38}%
\end{align}
for some function $u\in C^{\infty}(\mathbb{R}^{2}\times\mathbb{T}%
^{2};\mathbb{R}).$ From relationships \ \ (\ref{eq1.38}), as a result of the
commutativity of the flows \ (\ref{eq1.37}), \ one derives the equivalent Lax
type relationship \ (\ref{eq1.20}) for the vector fields, namely
\begin{equation}
\tilde{A}_{\nabla h_{+}^{(y)}}:=\lambda u_{y}\partial/\partial x_{1}%
-k\lambda\partial/\partial x_{2},\text{ \ \ }\tilde{A}_{\nabla h_{+}^{(t)}%
}:=\lambda u_{t}\partial/\partial x_{1}-\lambda\partial/\partial x_{2},
\label{eq1.39}%
\end{equation}
which can be rewritten as the compatibility condition for the following vector
field equations:
\begin{equation}
\frac{\partial\psi}{\partial t}+\lambda u_{t}\frac{\partial\psi}{\partial
x_{1}}-\lambda\frac{\partial\psi}{\partial x_{2}}=0,\text{ \ \ \ }%
\frac{\partial\psi}{\partial y}+\lambda u_{y}\frac{\partial\psi}{\partial
x_{1}}-k\lambda\frac{\partial\psi}{\partial x_{2}}=0, \label{eq1.40}%
\end{equation}
satisfied for $\psi\in C^{\infty}(\mathbb{R}^{2}\times\mathbb{T}%
^{2};\mathbb{C}),$ any $(t,y;x_{1},x_{2})\in\mathbb{R}^{2}\times\mathbb{T}%
^{2}$ and all $\lambda\in\mathbb{C}.$ The resulting equation is then
\begin{equation}
u_{yx_{2}}-u_{t}u_{yx_{1}}+u_{y}u_{tx_{1}}+ku_{tx_{2}}=0, \label{eq1.41}%
\end{equation}
which reduces at $k=0$ to the Alonso--Shabat heavenly equation \ (\ref{eq1.34}).

\begin{remark}
It is interesting to observe that the seed elements $\tilde{l}\in
\mathcal{\tilde{G}}^{\ast}$ of the examples presented above have the
differential geometric structure:%
\begin{equation}
\tilde{l}=\eta\text{ }d\rho,\ \label{eq1.42}%
\end{equation}
where $\ \eta$ and $\rho\in C^{\infty}(\mathbb{R}^{2}\times(\mathbb{C\times
T}^{2});\mathbb{C})$ are some smooth functions. For instance,
\begin{align*}
\tilde{l}  &  =\ d(\lambda x-2u\ )\text{\ \ \ -\ Mikhalev--Pavlov equation,}\\
& \\
\tilde{l}  &  =d(\lambda x_{1}+\lambda x_{2}-u_{x_{2}}+u_{x_{1}})\text{ \ -
Pleba\'{n}ski heavenly equation,}\\
& \\
\tilde{l}  &  =u_{x_{1}x_{2}}\xi du_{x_{2}},\xi:=\left(  \mu\ [\gamma
(\mu+\beta)]^{-1}+\alpha^{-1}-\mu\lbrack\beta(\mu-\gamma)]^{-1}\right)  \text{
- general heavenly equation,}\\
& \\
\tilde{l}  &  =(\lambda+1)z_{x_{1}}dz\text{ - \ Alonso--Shabat heavenly
equation.}%
\end{align*}

\end{remark}

\section{\label{Sec_4}The generalized heavenly type Lie-algebraic structures}

It is well known that the loop Lie algebra $\ \mathcal{\tilde{G}}:=$
$\widetilde{diff}(\mathbb{T}^{n})$ can be centrally extended as $\widehat
{\mathcal{G}}:=(\widetilde{diff}(\mathbb{T}^{n});\mathbb{R}^{1})$ only
\cite{GeFu} for the case $n=1,$ where for any two elements $(\tilde{a}%
;\alpha)$ and $(\tilde{b};\beta)\in\widehat{\mathcal{G}}$\ \ the commutator
\begin{equation}
\lbrack(\tilde{a};\alpha),(\tilde{b};\beta)]=([\tilde{a},\tilde{b}];\omega
_{2}(\tilde{a},\tilde{b}))\in\mathcal{\tilde{G}} \label{eq2.1}%
\end{equation}
and the $2$-cocycle $\omega_{2}:\mathcal{\tilde{G}}\times\mathcal{\tilde{G}%
}\rightarrow\mathbb{R}^{1}$ satisfies the condition
\begin{equation}
\omega_{2}([\tilde{a},\tilde{b}],\tilde{c})+\omega_{2}([\tilde{b},\tilde
{c}],\tilde{a})+\omega_{2}([\tilde{c},\tilde{a}],\tilde{b})=0 \label{eq2.2}%
\end{equation}
for any $\tilde{a},\tilde{b}$ and $\tilde{c}\in\mathcal{\tilde{G}}.$ For the
case $n=1$, the Gelfand--Fuchs $2$-cocycle \cite{GeFu} \ on the loop Lie
algebra $\mathcal{\tilde{G}}$ \ equals the expression
\begin{equation}
\omega_{2}(\tilde{a},\tilde{b})=\frac{1}{2\pi i}\oint\limits_{\mathbb{S}^{1}%
}\left(  \frac{\partial^{2}a(x;\lambda)}{\partial x^{2}},\frac{\partial
b(x;\lambda)}{\partial x}\right)  _{H^{0}}\frac{d\lambda}{\lambda^{p}}
\label{eq2.3}%
\end{equation}
for any vector fields $\tilde{a}=a(x;\lambda)\frac{\partial}{\partial
x},\tilde{b}=b(x;\lambda)\frac{\partial}{\partial x}\in\mathcal{\tilde{G}}$
\ on $\mathbb{T}^{1}$ and a fixed integer $p\in\mathbb{Z}.$

The integrable dynamical systems related to this central extension were
described in detail in \cite{Misi}. Concerning a further generalization of the
multi-dimensional case related to the loop group $\mathcal{\tilde{G}}$ for
$n\in\mathbb{Z}_{+}$ one can proceed in the following natural way: as the Lie
algebra $\mathcal{\tilde{G}=}$ $\widetilde{diff}(\mathbb{T}^{n})$ consists of
the elements formally depending additionally on the \textquotedblleft
spectral\textquotedblright\ variable $\lambda\in\mathbb{C}^{1},$ one can
extend the\ basic Lie structure on $\mathcal{G=}diff(\mathbb{T}^{n})$ to that
on the adjacent holomorphic in $\lambda\in\mathbb{S}_{\pm\text{ }}^{1}$ Lie
algebra $\mathcal{\bar{G}}:=diff_{hol}(\ \mathbb{C\times T}^{n})\subset
diff(\ \mathbb{C\times T}^{n})\ $of vector fields on $\ \mathbb{C\times T}%
^{n}.$ This has elements representable as $\ \bar{a}(x;\lambda):=<a(x;\lambda
),\frac{\partial}{\partial\mathrm{x}}>=\sum\limits_{j=1}^{n}a_{j}%
(x;\lambda)\frac{\partial}{\partial x_{j}}+a_{0}(x;\lambda)\frac{\partial
}{\partial\lambda}\in\mathcal{\bar{G}}\ \ $ for some holomorphic in
$\lambda\in\mathbb{S}_{\pm\text{ }}^{1}$ vectors $\ a(x;\lambda)\ \in
\mathbb{E}\times\mathbb{E}^{n}$ for all $\ x\ \in\mathbb{T}^{n},$ where
$\frac{\partial}{\partial\mathrm{x}}:=$ $(\frac{\partial}{\partial\lambda
},\frac{\partial}{\partial x_{1}},\frac{\partial}{\partial x_{2}}%
,...,\frac{\partial}{\partial x_{n}})^{\intercal}$ is$\ $the generalized
Euclidean vector gradient with respect to the vector variable\ $\mathrm{x}%
:=(\lambda,x)\in$ $\mathbb{C\times T}^{n}.$

It is now important to mention that the Lie algebra $\mathcal{\bar{G}}$
$\subset diff(\ \mathbb{C\times T}^{n})$\ also splits into the direct sum of
two subalgebras:
\begin{equation}
\mathcal{\bar{G}}=\mathcal{\bar{G}}_{+}\oplus\mathcal{\bar{G}}_{-},
\label{eq2.7}%
\end{equation}
allowing to introduce on it the classical $\mathcal{R}$-structure:
\begin{equation}
\lbrack\bar{a},\bar{b}]_{\mathcal{R}}:=[\mathcal{R}\bar{a},\bar{b}]+[\bar
{a},\mathcal{R}\bar{b}] \label{eq2.8}%
\end{equation}
for any $\bar{a},\bar{b}\in\mathcal{\bar{G}},$ where
\begin{equation}
\mathcal{R}:=(P_{+}-P_{-})/2, \label{eq2.9}%
\end{equation}
and
\begin{equation}
P_{\pm}\mathcal{\bar{G}}:=\mathcal{\bar{G}}_{\pm}\subset\mathcal{\bar{G}}.
\label{eq2.10}%
\end{equation}
The space $\mathcal{\bar{G}}^{\ast}\simeq\Lambda^{1}(\ \mathbb{C\times T}%
^{n}\mathbb{)},$ adjoint to the Lie algebra $\mathcal{\bar{G}}\ $of vector
fields on $\ \mathbb{C\times T}^{n},$ can be functionally identified with
$\mathcal{\bar{G}}$\ $\ $subject to the Sobolev type metric
\begin{equation}
(\bar{l},\ \bar{a})\ =\frac{1}{2\pi i}\oint\limits_{\mathbb{S}^{1}}%
\lambda^{-p}d\lambda(l,a)_{H^{q}}, \label{eq2.11}%
\end{equation}
where $p\in\mathbb{Z},q\in\mathbb{Z}_{+}$ and for arbitrary $\bar
{l}:=<l(x;\lambda),d\mathrm{x}>=\
%TCIMACRO{\dsum \limits_{j=\overline{0,n}}}%
%BeginExpansion
{\displaystyle\sum\limits_{j=\overline{0,n}}}
%EndExpansion
l_{j}(x;\lambda)d\mathrm{x}_{j}\ \in\mathcal{\bar{G}}^{\ast},$ $\bar{a}=%
%TCIMACRO{\dsum \limits_{j=\overline{0,n}}}%
%BeginExpansion
{\displaystyle\sum\limits_{j=\overline{0,n}}}
%EndExpansion
a_{j}(x;\lambda),\frac{\partial}{\partial\mathrm{x}_{j}}$ $\in\mathcal{\bar
{G}}\ $ one defines
\begin{equation}
(l,a)_{H^{q}}=\sum\limits_{j=0}^{n}\sum\limits_{|\alpha|=0}^{q}\int
\limits_{\mathbb{T}^{n}}dx\frac{\partial^{|\alpha|}l_{j}}{\partial
\mathrm{x}^{\alpha}}\frac{\partial^{|\alpha|}a_{j}}{\partial\mathrm{x}%
^{\alpha}}. \label{eq2.12}%
\end{equation}
In particular,\ for $q=0$ one has $(l,a)_{H^{0}}=\int\limits_{\mathbb{T}^{n}%
}dx\sum\limits_{j=0}^{n}l_{j}\ a_{j},$ the case which will be mainly chosen.
Then for arbitrary $f,g\in\mathrm{D}(\mathcal{\bar{G}}^{\ast}),$ one can
determine two Lie--Poisson brackets
\begin{equation}
\{f,g\}:=(\bar{l},[\nabla f(\bar{l}),\nabla g(\bar{l})])\ \label{eq2.13}%
\end{equation}
\ and
\begin{equation}
\{f,g\}_{\mathcal{R}}:=(\bar{l},[\nabla f(\bar{l}),\nabla g(\bar
{l})]_{\mathcal{R}})\ , \label{eq2.14}%
\end{equation}
where at any seed element $\bar{l}\in\mathcal{\bar{G}}^{\ast}$ the gradient
element $\nabla f(\bar{l})$ and $\nabla g(\bar{l})\in\mathcal{\bar{G}}$ are
calculated with respect to the metric (\ref{eq2.11}).

Now let us assume that a smooth function $\gamma\in I(\mathcal{\bar{G}}^{\ast
})$ is a Casimir invariant, that is \
\begin{equation}
ad_{\nabla\gamma(\bar{l})}^{\ast}\bar{l}=0 \label{eq2.15}%
\end{equation}
for a chosen seed element $\bar{l}\in\mathcal{\bar{G}}^{\ast}.$ As the adjoint
mapping $ad_{\nabla f(\bar{l})}^{\ast}\bar{l}$ \ for any $f\in\mathrm{D}%
(\mathcal{\bar{G}}^{\ast})$ can be rewritten in the reduced form as
\begin{equation}
ad_{\nabla f(\bar{l})}^{\ast}(\bar{l})=\ \ \left\langle \frac{\partial
}{\partial\mathrm{x}},\nabla f(l)\right\rangle \bar{l}+\sum_{j=1}%
^{n}\left\langle \left\langle l,\ \frac{\partial}{\partial\mathrm{x}}\ \nabla
f(l)\ \right\rangle ,d\mathrm{x}\right\rangle , \label{eq2.16}%
\end{equation}
where $\nabla f(\bar{l}):=<\nabla f(l),\frac{\partial}{\partial\mathrm{x}}%
>.$\ For the Casimir function $\gamma\in\mathrm{D}(\mathcal{\bar{G}}^{\ast})$,
the condition (\ref{eq2.15}) is then equivalent to the equation
\begin{equation}
\ \ l\left\langle \frac{\partial}{\partial\mathrm{x}},\nabla\gamma
(l)\right\rangle \ +\left\langle \nabla\gamma(l),\frac{\partial}%
{\partial\mathrm{x}}\right\rangle l+\ \ \left\langle l,(\frac{\partial
}{\partial\mathrm{x}}\nabla\gamma(l))\right\rangle =0, \label{eq2.18}%
\end{equation}
which should be solved analytically. In the case when an element \ $\bar{l}%
\in\mathcal{\bar{G}}^{\ast}$ is singular as $|\lambda|\rightarrow\infty,$ one
can consider the general asymptotic expansion
\begin{equation}
\nabla\gamma:=\nabla\gamma^{(p)}\ \sim\lambda^{p}\sum\limits_{j\in
\mathbb{Z}_{+}}\nabla\gamma_{j}^{(p)}\lambda^{-j}\ \label{eq2.19}%
\end{equation}
for some suitably chosen $p\in\mathbb{Z}_{+},$ and upon substituting
\ (\ref{eq2.19}) into the equation \ (\ref{eq2.18}), one can solve it recurrently.

Now let $h^{(y)},h^{(t)}\in I(\mathcal{\bar{G}}^{\ast})$ be such Casimir
functions for which \ the Hamiltonian vector field generators
\begin{equation}
\nabla h_{+}^{(y)}(l):=(\ \nabla\gamma^{(p_{y})}(l))_{+},\ \ \ \ \ \nabla
h_{+}^{(t)}(l):=(\ \nabla h^{(p_{t})}(l))_{+} \label{eq2.20}%
\end{equation}
are, respectively, defined for special integers $p_{y},p_{t}\in\mathbb{Z}%
_{+}.$ These invariants generate, owing to the Lie--Poisson bracket
(\ref{eq2.14}), for the case $q=0$ the following commuting flows
\begin{align}
\partial l/\partial t\text{ }  &  =-\ \left\langle \frac{\partial}%
{\partial\mathrm{x}},\nabla h_{+}^{(t)}(l)\right\rangle l-\ \left\langle
l,(\frac{\partial}{\partial\mathrm{x}}\nabla h_{+}^{(t)}(l))\right\rangle
,\nonumber\\
& \label{eq2.21}\\
\partial l/\partial y\text{ }  &  =-\ \left\langle \frac{\partial}%
{\partial\mathrm{x}},\nabla h_{+}^{(y)}(l)\right\rangle l-\ \left\langle
l,(\frac{\partial}{\partial\mathrm{x}}\nabla h_{+}^{(y)}(l))\right\rangle
>,\nonumber
\end{align}
where $y,t\in\mathbb{R}$ are the corresponding evolution parameters. Since the
invariants $\ \ h^{(y)},h^{(t)}\in I(\mathcal{\bar{G}}^{\ast})$ commute with
respect to the Lie--Poisson bracket (\ref{eq2.14}), the flows (\ref{eq2.21})
also commute, implying that the corresponding Hamiltonian vector field
generators
\begin{equation}
\bar{A}_{\nabla h_{+}^{(t)}}:=\left\langle \frac{\partial}{\partial\mathrm{x}%
},\nabla h_{+}^{(t)}(l)\right\rangle ,\ \ \ \ \ \bar{A}_{\nabla h_{+}^{(y)}%
}:=\left\langle \frac{\partial}{\partial\mathrm{x}},\nabla h_{+}%
^{(y)}(l)\right\rangle \label{eq2.22}%
\end{equation}
satisfy the Lax compatibility condition
\begin{equation}
\frac{\partial}{\partial y}\bar{A}_{\nabla h_{+}^{(t)}}-\frac{\partial
}{\partial t}\bar{A}_{\nabla h_{+}^{(y)}}=[\bar{A}_{\nabla h_{+}^{(t)}}%
,\bar{A}_{\nabla h_{+}^{(y)}}] \label{eq2.23}%
\end{equation}
for all $y,t\in\mathbb{R}.$ On the other hand, the condition (\ref{eq2.23}) is
equivalent to the compatibility condition of two linear equations
\begin{equation}
(\frac{\partial}{\partial t}+\bar{A}_{\nabla h_{+}^{(t)}})\psi
=0,\ \ \ \ \ (\frac{\partial}{\partial y}+\bar{A}_{\nabla h_{+}^{(y)}})\psi=0
\label{eq2.24}%
\end{equation}
for a function $\psi\in C^{\infty}(\mathbb{R}\times\mathbb{T}^{n};\mathbb{C})$
for all $y,t\in\mathbb{R}$ and any $\lambda\in\mathbb{C}.$

The above can be formulated as the following key result:

\begin{proposition}
\label{Prop_2.1} Let a seed vector field be $\bar{l}\in\mathcal{\bar{G}}%
^{\ast}$ and $h^{(y)},h^{(t)}\in I(\mathcal{\bar{G}}^{\ast})$ be Casimir
functions subject to the metric $(\cdot,\cdot)$ on the loop Lie algebra
$\mathcal{\bar{G}}$ and the natural coadjoint action on the loop co-algebra
$\mathcal{\bar{G}}^{\ast}.$ Then the following dynamical systems
\begin{equation}
\partial\bar{l}/\partial y=-ad_{\nabla h_{+}^{(y)}(\bar{l})}^{\ast}\bar
{l},\text{ \ \ }\partial\bar{l}/\partial t=-ad_{\nabla h_{+}^{(t)}(\bar{l}%
)}^{\ast}\bar{l} \label{eq2.24a}%
\end{equation}
are commuting Hamiltonian flows for all $y,t\in\mathbb{R}.$ \ Moreover, the
compatibility condition of these flows is equivalent to the vector fields
representation
\begin{equation}
(\partial/\partial t+\bar{A}_{\nabla h_{+}^{(t)}})\psi=0,\ \ \ \ \ (\partial
/\partial y+\bar{A}_{\nabla h_{+}^{(y)}})\psi=0, \label{eq2.24b}%
\end{equation}
where $\psi\in C^{\infty}(\mathbb{R}^{2}\times\mathbb{C\times T}%
^{n};\mathbb{C})$ and the vector fields $\bar{A}_{\nabla h_{+}^{(y)}},\bar
{A}_{\nabla h_{+}^{(t)}}\in\mathcal{\bar{G}}$ \ are given by the expressions
(\ref{eq2.22}) and (\ref{eq2.20}).
\end{proposition}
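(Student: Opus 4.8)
The plan is to mirror the argument establishing Lemma~\ref{Lm_1.1} and Proposition~\ref{prop1.1}, now carried out on the enlarged algebra $\mathcal{\bar{G}}=diff_{hol}(\mathbb{C}\times\mathbb{T}^{n})$ rather than on $\mathcal{\tilde{G}}$. Accordingly the proof splits into two parts: first, that the flows (\ref{eq2.24a}) genuinely commute and are Hamiltonian; second, that their compatibility is equivalent both to the Lax relation (\ref{eq2.23}) and to the linear vector-field system (\ref{eq2.24b}).

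For the first part I would use that $h^{(y)},h^{(t)}\in I(\mathcal{\bar{G}}^{\ast})$ are Casimir invariants. By (\ref{eq2.15}) one has $ad_{\nabla h^{(y)}(\bar{l})}^{\ast}\bar{l}=0=ad_{\nabla h^{(t)}(\bar{l})}^{\ast}\bar{l}$, so splitting $\nabla h=\nabla h_{+}+\nabla h_{-}$ along the decomposition (\ref{eq2.7}) gives $ad_{\nabla h_{+}(\bar{l})}^{\ast}\bar{l}=-ad_{\nabla h_{-}(\bar{l})}^{\ast}\bar{l}$; this is precisely what permits the $\mathcal{R}$-bracket flows to be rewritten in the reduced one-sided form (\ref{eq2.24a}). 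Since both Hamiltonians are Casimirs, their second Poisson bracket (\ref{eq2.14}) vanishes, and the standard AKS/$\mathcal{R}$-structure conclusion is that the associated flows commute for all $y,t\in\mathbb{R}$.

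The core of the argument is the compatibility computation. Following the chain of identities displayed in the proof of Lemma~\ref{Lm_1.1} (see (\ref{eq1.21})), I would pair $\partial^{2}\bar{l}/\partial t\partial y-\partial^{2}\bar{l}/\partial y\partial t$ against an arbitrary test field $\bar{Z}\in\mathcal{\bar{G}}$ and repeatedly invoke the coadjoint-invariance of the metric (\ref{eq2.11}) together with the Jacobi identity. Using the reduced expression (\ref{eq2.16}) for the coadjoint action, which on $\mathcal{\bar{G}}^{\ast}$ now carries the additional $\partial/\partial\lambda$ direction absent in the $\mathcal{\tilde{G}}$ case, the whole expression collapses to $(ad_{\nabla H(\bar{l})}^{\ast}\bar{l},\bar{Z})=0$, where $\nabla H(\bar{l})$ is exactly the left-hand side of the Lax relation (\ref{eq2.23}). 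As $\bar{Z}$ is arbitrary, this yields $ad_{\nabla H(\bar{l})}^{\ast}\bar{l}=0$, so $H$ is itself a Casimir functional.

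The main obstacle is the concluding step: deducing $\nabla H(\bar{l})=0$ from $ad_{\nabla H}^{\ast}\bar{l}=0$. Here I would exploit the analytic structure in $\lambda$. Each $\bar{A}_{\nabla h_{+}}$ in (\ref{eq2.22}) is built from the positive projection $(\cdot)_{+}$ of a Casimir gradient and is therefore polynomial in $\lambda$ of controlled degree; the $t$- and $y$-derivatives and the single commutator appearing in (\ref{eq2.23}) keep $\nabla H$ within the same holomorphic class, so it cannot simultaneously be the gradient of a Casimir and exhibit the decay forced by the coadjoint-vanishing condition unless it is identically zero, exactly as argued in \cite{BoDrMa}. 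Once $\nabla H=0$, that is once (\ref{eq2.23}) holds, the equivalence with (\ref{eq2.24b}) is the routine zero-curvature identity: the compatibility of the two linear operators $\partial/\partial t+\bar{A}_{\nabla h_{+}^{(t)}}$ and $\partial/\partial y+\bar{A}_{\nabla h_{+}^{(y)}}$ acting on $\psi$ is precisely $\partial_{y}\bar{A}_{\nabla h_{+}^{(t)}}-\partial_{t}\bar{A}_{\nabla h_{+}^{(y)}}-[\bar{A}_{\nabla h_{+}^{(t)}},\bar{A}_{\nabla h_{+}^{(y)}}]=0$, which is (\ref{eq2.23}), completing the proof.
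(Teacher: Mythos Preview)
Your proposal is correct and follows essentially the same route as the paper: the proposition is stated there as a summary of the preceding discussion, which sets up the $\mathcal{R}$-structure on $\mathcal{\bar{G}}$, invokes commutativity of the Casimirs under $\{\cdot,\cdot\}_{\mathcal{R}}$ to get commuting flows, and then asserts the Lax relation (\ref{eq2.23}) and its equivalence with (\ref{eq2.24}) by direct analogy with Lemma~\ref{Lm_1.1}. Your explicit recapitulation of the pairing-with-$\bar{Z}$ computation from (\ref{eq1.21}) and the analyticity argument for $\nabla H=0$ is exactly the content the paper leaves implicit by that analogy.
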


\begin{remark}
\label{Rem2.2} \textit{\ As mentioned above, the expansion (\ref{eq2.19}) is
effective if a chosen seed element } $\bar{l}\in\mathcal{\bar{G}}^{\ast}%
$\textit{\ is singular as} $|\lambda|\rightarrow\infty.$ \textit{In the case
when it is singular as} $|\lambda|\rightarrow0,$ \textit{the expression
(\ref{eq2.19}) should be replaced by the expansion
\begin{equation}
\mathit{\nabla}\gamma^{(p)}(l)\mathit{\sim\lambda}^{-p}\sum\limits_{j\in
\mathbb{Z}_{+}}\mathit{\nabla}\gamma_{j}^{(p)}(l)\mathit{\lambda}%
^{j}\label{eq2.24c}%
\end{equation}
for suitably chosen integers \ \ }$p\in\mathbb{Z}_{+},$ \ \textit{and the
reduced Casimir function gradients then are given by the Hamiltonian vector
field generators
\begin{equation}
\nabla h_{-}^{(y)}(l):=\lambda(\lambda^{-p_{y}-1}\nabla\gamma^{(p_{y}%
)}(l))_{-},\ \ \ \ \ \nabla h_{-}^{(t)}(l):=\lambda(\lambda^{-p_{t}-1}%
\nabla\gamma^{(p_{t})}(l))_{-}\label{eq2.24d}%
\end{equation}
for suitably chosen positive integers }$p_{y},p_{t}\in\mathbb{Z}_{+}$
\textit{and the corresponding Hamiltonian flows are, respectively, written as
\begin{equation}
\partial\bar{l}/\partial t=ad_{\triangledown h_{-}^{(t)}(\bar{l})}^{\ast}%
\bar{l},\ \ \partial\bar{l}/\partial y=ad_{\triangledown h_{-}^{(y)}(\bar{l}%
)}^{\ast}\bar{l}.\label{eq2.24e}%
\end{equation}
}
\end{remark}

As in Section \ref{Sec_2} the Proposition \ref{Prop_2.1} abov e makes it
possible to describe the B\"{a}cklund transformations \ between two special
solution sets for the\ dispersionless heavenly equations resulting from the
\ Lax compatibility condition \ (\ref{eq2.24a}). Let \ a diffeomorphism
$\xi\in Diff(\mathbb{C\times T}^{n})$ \ be such that a seed loop differential
form $\bar{l}(\lambda,x)\in\mathcal{\bar{G}}^{\ast}\simeq\Lambda
^{1}(\ \mathbb{C\times T}^{n})$ satisfies the invariance condition
\begin{equation}
\ \bar{l}(\xi(\mathrm{x};\mu))=k\bar{l}(\mathrm{\bar{x}}) \label{eq2.25}%
\end{equation}
for\ some\ non-zero constant $k\in\mathbb{C}\backslash\{0\},$ any
$\mathrm{x}=(\lambda,x)$ and $\mathrm{\bar{x}}=(\mu,x)\in\mathbb{C\times
T}^{n}\ $and arbitrarily an chosen \ parameter $\ \mu\in\mathbb{C}.$ As the
seed element $\bar{l}(\xi(\mathrm{x};\mu))\ \in\Lambda^{1}(\ \mathbb{C\times
T}^{n})$ satisfies simultaneously the system of compatible equations
\ (\ref{eq2.24a}), the loop diffeomorphism $\xi\in Diff(\ \mathbb{C\times
T}^{n}),$ found analytically from the invariance condition \ (\ref{eq2.25}),
satisfies the compatible system of vector field equations \
\[
\frac{\partial}{\partial t}\xi=\nabla h_{+}^{(t)}(l),\text{ \ }\frac{\partial
}{\partial y}\xi=\nabla h_{+}^{(y)}(l),
\]
giving rise to the B\"{a}cklund type relationships for the coefficients of the
seed\ loop differential form $\bar{l}\in\mathcal{\bar{G}}^{\ast}$
$\simeq\Lambda_{hol}^{1}(\ \mathbb{C\times T}^{n}).$

It worth mentionoing that, following Ovsienko's scheme \cite{Ovsi-1,Ovsi-2},
one can consider a wider class of integrable heavenly equations, realized as
compatible Hamiltonian flows on the semidirect product of the \ holomorphic
loop Lie algebra $\mathcal{\tilde{G}}$ \ of vector fields on the torus
$\mathbb{T}^{n}$ and its regular co-adjoint space $\mathcal{\tilde{G}}^{\ast
},$ supplemented with naturally related cocycles. We plan to analyze this
aspect of the construction, devised in the present work, in a paper now in
preparation \ (\ref{eq1.20}).

\subsection{Example: Einstein--Weyl metric equation}

Define $\mathcal{\bar{G}}^{\ast}=diff_{hol}(\mathbb{T}^{1}\times\mathbb{C})$
and take the seed element
\[
\tilde{l}=\left(  u_{x}\lambda-2u_{x}v_{x}-u_{y}\right)  dx+\left(
\lambda^{2}-v_{x}\lambda+v_{y}+{v_{x}^{2}}\right)  d\lambda,
\]
which generates with respect to the metric (\ref{eq2.11}) (as before for
$q=0)$ the gradient of the Casimir invariants $h^{(p_{t})},h^{(p_{y})}\in
I(\mathcal{\bar{G}}^{\ast})$ in the form
\begin{align}
\nabla h^{(p_{t})}(l)  &  \sim\ \lambda^{2}(0,1)^{\intercal}+(-u_{x}%
,v_{x})^{\intercal}\lambda\ +(u_{y},u-v_{y})^{\intercal}+O(\lambda
^{-1}),\label{eq2.26}\\
& \nonumber\\
\nabla h^{(p_{y})}(l)  &  \sim\lambda(0,1)^{\intercal}+(-u_{x},v_{x}%
)^{\intercal}\ +(u_{y},-v_{y})^{\intercal}\lambda^{-1}+O(\lambda
^{-2})\nonumber
\end{align}
as $|\lambda|\rightarrow\infty\ $at $p_{t}=2,$ $p_{y}=1.$ \ For the gradients
of the Casimir functions $h^{(t)},h^{(y)}\in I(\mathcal{G}^{\ast}),$
determined by (\ref{eq2.20}) one can easily obtain the corresponding
Hamiltonian vector field generators
\begin{align}
&  \tilde{A}_{\nabla h_{+}^{(t)}}=\left\langle \nabla h_{+}^{(t)}%
(l),\frac{\partial}{\partial\mathrm{x}}\right\rangle =(\lambda^{2}+\lambda
v_{x}+u-v_{y})\frac{\partial}{\partial x}+(-\lambda u_{x}+u_{y})\frac
{\partial}{\partial\lambda},\nonumber\\
& \nonumber\\
&  \tilde{A}_{\nabla h_{+}^{(y)}}=\left\langle \nabla h_{+}^{(y)}%
(l),\frac{\partial}{\partial\mathrm{x}}\right\rangle =(\lambda+v_{x}%
)\frac{\partial}{\partial x}-u_{x}\frac{\partial}{\partial\lambda},
\label{eq2.27}%
\end{align}
satisfying the compatibility condition (\ref{eq2.23}), which is equivalent to
the set of equations
\begin{equation}
\left\{
\begin{array}
[c]{l}%
u_{xt}+u_{yy}+(uu_{x})_{x}+v_{x}u_{xy}-v_{y}u_{xx}=0,\\
\\
v_{xt}+v_{yy}+uv_{xx}+v_{x}v_{xy}-v_{y}v_{xx}=0,
\end{array}
\right.  \label{eq2.28}%
\end{equation}
describing general integrable Einstein--Weyl metric equations \cite{DuMaTo}.

As is well known \cite{MaSa}, the invariant reduction of (\ref{eq2.28}) at
$v=0$ gives rise to the famous dispersionless Kadomtsev--Petviashvili
equation
\begin{equation}
(u_{t}+uu_{x})_{x}+u_{yy}=0, \label{eq2.29}%
\end{equation}
for which the reduced vector field representation (\ref{eq2.24}) follows from
(\ref{eq2.27}) and is given by the vector fields
\begin{align}
&  \bar{A}_{\nabla h_{+}^{(t)}}=(\lambda^{2}+u)\frac{\partial}{\partial
x}+(-\lambda u_{x}+u_{y})\frac{\partial}{\partial\lambda},\label{eq2.30}\\
& \nonumber\\
&  \bar{A}_{\nabla h_{+}^{(y)}}=\lambda\frac{\partial}{\partial x}-u_{x}%
\frac{\partial}{\partial\lambda},\nonumber
\end{align}
satisfying the compatibility condition (\ref{eq2.23}), equivalent to the
equation (\ref{eq2.29}). In particular, one derives from (\ref{eq2.24}) and
\ (\ref{eq2.30}) the vector field compatibility relationships
\begin{equation}%
\begin{array}
[c]{c}%
\frac{\partial\psi}{\partial t}+(\lambda^{2}+u)\frac{\partial\psi}{\partial
x}+(-\lambda u_{x}+u_{y})\frac{\partial\psi}{\partial\lambda}=0\\
\\
\ \frac{\partial\psi}{\partial y}+\lambda\frac{\partial\psi}{\partial x}%
-u_{x}\frac{\partial\psi}{\partial\lambda}=0,
\end{array}
\label{eq2.30a}%
\end{equation}
satisfied for $\psi\in C^{\infty}(\mathbb{R}^{2}\times\mathbb{T}^{1}%
\times\mathbb{C};\mathbb{C})\ $and any $\ y,t\in\mathbb{R},(x,\mathbb{\lambda
)}\in\mathbb{T}^{1}\times\mathbb{C}.$

\subsection{The modified Einstein--Weyl metric equation}

This equation system is
\begin{align}
u_{xt}  &  =u_{yy}+u_{x}u_{y}+u_{x}^{2}w_{x}+uu_{xy}+u_{xy}w_{x}%
+u_{xx}a,\label{eq2.31aa}\\
& \nonumber\\
w_{xt}  &  =uw_{xy}+u_{y}w_{x}+w_{x}w_{xy}+aw_{xx}-a_{y},\nonumber
\end{align}
where $a_{x}:=u_{x}w_{x}-w_{xy},$ and was recently derived in \cite{Szab}, In
this case we take also $\mathcal{\bar{G}}^{\ast}=diff_{hol}(\mathbb{T}%
^{1}\times\mathbb{C}),$ \ yet \ for a seed element $\ \ \tilde{l}%
\in\mathcal{\bar{G}}$ \ \ we choose the form
\begin{align}
&  \tilde{l}=[{{\lambda}^{2}}u_{x}+\left(  2u_{x}w_{x}+u_{y}+3{u}u_{x}\right)
\lambda+2u_{x}\partial_{x}^{-1}{\left.  u_{x}w_{x}\right.  }+2u_{x}%
\partial_{x}^{-1}{\left.  u_{y}\right.  +}\label{eq2.31a}\\
& \nonumber\\
&  +3u_{x}{{w_{x}}^{2}}+2u_{y}w_{x}+6{u}u_{x}w_{x}+2{u}u_{y}+3{{u}^{2}}%
u_{x}-2{a}u_{x}]dx+\nonumber\\
& \nonumber\\
+  &  [{{\lambda}^{2}}+\left(  w_{x}+3{u}\right)  \lambda+2\partial_{x}%
^{-1}{\left.  u_{x}w_{x}\right.  }+2\partial_{x}^{-1}{\left.  u_{y}\right.
}+{{w_{x}}^{2}}+3{u}w_{x}+3{{u}^{2}}-{a}]d\lambda,\nonumber
\end{align}
which with respect to the metric (\ref{eq2.11}) (as before for $q=0)$
generates two Casimir invariants $\gamma^{(j)}\in I(\mathcal{\bar{G}}^{\ast
}),j=\overline{1,2},$ whose gradients are%
\begin{align}
\nabla\gamma^{(2)}(l)  &  \sim\lambda^{2}[(u_{x},-1)^{\intercal}+(uu_{x}%
+u_{y},-u+w_{x})^{\intercal}\lambda^{-1}\ +\label{eq2.31ba}\\
& \nonumber\\
\  &  +(0,uw_{x}-a)^{\intercal}\lambda^{-2}]+O(\lambda^{-1})\ ,\nonumber\\
& \nonumber\\
\nabla\gamma^{(1)}(l)  &  \sim\lambda\lbrack(u_{x},-1)^{\intercal}%
+(0,w_{x})^{\intercal}\ \lambda^{-1}]+O(\lambda^{-1}),\nonumber
\end{align}
as $|\lambda|\rightarrow\infty\ $at $p_{y}=1,p_{t}=2.$ The corresponding
gradients of the Casimir functions $h^{(t)},h^{(y)}\in I(\mathcal{G}^{\ast}),$
determined by (\ref{eq2.20}), generate \ the Hamiltonian vector field
expressions
\begin{align}
&  \nabla h_{+}^{(y)}:=\nabla\gamma^{(1)}(l)|_{+}=(u_{x}\lambda,-\lambda
+w_{x})^{\intercal},\label{eq2.31c}\\
& \nonumber\\
&  \nabla h_{2,+}^{(t)}=\nabla\gamma^{(2)}(l)|_{+}=(u_{x}\lambda^{2}%
+(uu_{x}+u_{y})\lambda,-\lambda^{2}+(w_{x}-u)\lambda+uw_{x}-a)^{\intercal
}.\nonumber
\end{align}
Now one easily obtains from \ (\ref{eq2.31c}) the compatible Lax system of
linear equations%

\begin{align}
&  \frac{\partial\psi}{\partial y}+(-\lambda+w_{x})\frac{\partial\psi
}{\partial x}+u_{x}\lambda\frac{\partial\psi}{\partial\lambda}%
=0,\label{eq2.31d}\\
& \nonumber\\
&  \frac{\partial\psi}{\partial t}+\ (-\lambda^{2}+(\ w_{x}-u)\lambda
+uw_{x}-a)\frac{\partial\psi}{\partial x}+(u_{x}\lambda^{2}+(uu_{x}%
+u_{y})\lambda)\frac{\partial\psi}{\partial\lambda}=0,\nonumber
\end{align}
satisfied for $\psi\in C^{\infty}(\mathbb{R}^{2}\times\mathbb{T}^{1}%
\times\mathbb{C};\mathbb{C})\ $and any $\ y,t\in\mathbb{R},$
$\ (x,\mathbb{\lambda)}\in\mathbb{T}^{1}\times\mathbb{C}.$

\subsection{Example: The Dunajski heavenly equations}

This equation, \ suggested in \cite{Duna}, generalizes the corresponding
anti-self-dual vacuum Einstein equation, which is related to the Pleba\'{n}ski
metric and the celebrated Pleba\'{n}ski \cite{Pleb} second heavenly equation
\ (\ref{eq1.29}). To study the integrability of the Dunajski equations
\begin{align}
u_{x_{1}t}+u_{yx_{2}}+u_{x_{1}x_{1}}u_{x_{2}x_{2}}-u_{x_{1}x_{2}}^{2}-v  &
=0,\label{eq2.44}\\
& \nonumber\\
v_{x_{1}t}+v_{x_{2}y}+u_{x_{1}x_{1}}v_{x_{2}x_{2}}-2u_{x_{1}x_{2}}%
v_{x_{1}x_{2}}  &  =0,\nonumber
\end{align}
where $(u,v)\in C^{\infty}(\mathbb{R}^{2}\times\mathbb{T}^{2};\mathbb{R}%
^{2}),$ $(y,t;x_{1},x_{2})\in\mathbb{R}^{2}\times\mathbb{T}^{2},$ we define
$\mathcal{\bar{G}}^{\ast}:=diff_{hol}^{\ast}(\ \mathbb{C\times T}^{n})$ and
take the following as a seed element $\bar{l}\in\mathcal{\bar{G}}^{\ast}$
\begin{equation}
\tilde{l}=(\lambda+v_{x_{1}}-u_{x_{1}x_{1}}+u_{x_{1}x_{2}})dx_{1}%
+(\lambda+v_{x_{2}}+u_{x_{2}x_{2}}-u_{x_{1}x_{2}})dx_{2}+(\lambda-x_{1}%
-x_{2})d\lambda. \label{eq2.45}%
\end{equation}
With respect to the metric (\ref{eq2.11}) (as before for $q=0),$ the gradients
of two functionally independent Casimir invariants $h^{(p_{y})},h^{(p_{y}%
)\ }\in I(\mathcal{\bar{G}}^{\ast})\ $can be obtained as $|\lambda
|\rightarrow\infty\ $\ in the asymptotic form as
\begin{align}
\nabla h^{(p_{y})\ }(l)  &  \sim\lambda(0,1,0)^{\intercal}+(-v_{x_{1}%
},-u_{x_{1}x_{2}},u_{x_{1}x_{1}})^{\intercal}\ +O(\lambda^{-1}),\label{eq2.46}%
\\
& \nonumber\\
\nabla h^{(p_{t})\ }(l)  &  \sim\lambda(0,0,-1)^{\intercal}+(v_{x_{2}%
},u_{x_{2}x_{2}},-u_{x_{1}x_{2}})^{\intercal}\ +O(\lambda^{-1})\ \nonumber
\end{align}
at $p_{t}=1=p_{y}.$ \ Upon calculating the Hamiltonian vector field
generators
\begin{align}
\nabla h_{+}^{(y)}  &  :=\nabla h^{(p_{y})\ }(l)|_{+}=(-v_{x_{1}}%
,\lambda-u_{x_{1}x_{2}},u_{x_{1}x_{1}})^{\intercal},\label{eq2.47}\\
& \nonumber\\
\nabla h_{+}^{(t)}  &  :=\nabla h^{(p_{t})\ }(l)|_{+}=(v_{x_{2}},u_{x_{2}%
x_{2}},-\lambda-u_{x_{1}x_{2}})^{\intercal},\nonumber
\end{align}
following from the Casimir functions gradients \ (\ref{eq2.46}), \ one easily
obtains the following vector fields
\begin{align}
&  \bar{A}_{\nabla h_{+}^{(t)}}=<\nabla h_{+}^{(t)},\frac{\partial}%
{\partial\mathrm{x}}>=u_{x_{2}x_{2}}\frac{\partial}{\partial x_{1}}%
-(\lambda+u_{x_{1}x_{2}})\frac{\partial}{\partial x_{2}}+v_{x_{2}}%
\frac{\partial}{\partial\lambda},\label{eq2.48}\\
& \nonumber\\
&  \bar{A}_{\nabla h_{+}^{(y)}}=<\nabla h_{+}^{(y)},\frac{\partial}%
{\partial\mathrm{x}}>=(\lambda-u_{x_{1}x_{2}})\frac{\partial}{\partial x_{1}%
}+u_{x_{1}x_{1}}\frac{\partial}{\partial x_{2}}-v_{x_{1}}\frac{\partial
}{\partial\lambda},\nonumber
\end{align}
satisfying the Lax compatibility condition \ (\ref{eq2.23}), which is
equivalent to the the Dunajski \cite{Duna} vector field compatibility
relationships (\ref{eq1.24}) \
\begin{align}
\frac{\partial\psi}{\partial t}+u_{x_{2}x_{2}}\frac{\partial\psi}{\partial
x_{1}}-(\lambda+u_{x_{1}x_{2})}\frac{\partial\psi}{\partial x_{2}}+v_{x_{2}%
}\frac{\partial\psi}{\partial\lambda}  &  =0,\nonumber\\
& \label{eq2.49}\\
\frac{\partial\psi}{\partial y}+(\lambda-u_{x_{1}x_{2}})\frac{\partial\psi
}{\partial x_{1}}+u_{x_{1}x_{1}}\frac{\partial\psi}{\partial x_{2}}-v_{x_{1}%
}\frac{\partial\psi}{\partial\lambda}  &  =0,\nonumber
\end{align}
satisfied for $\psi\in C^{\infty}(\mathbb{R}^{2}\times\mathbb{C\times
}\mathbb{T}^{2};\mathbb{C}),$ any $\ (y,t;x_{1},x_{2})\in\mathbb{R}^{2}%
\times\mathbb{T}^{2}$ and all $\lambda\in\mathbb{C}.$ \ As was mentioned in
\cite{BoDrMa}, the Dunajski equations \ (\ref{eq2.44}) generalize both the
dispersionless Kadomtsev--Petviashvili and Pleba\'{n}ski second heavenly
equations, and is also a Lax integrable Hamiltonian system.

\section{\label{Sec_5}Integrability, bi-Hamiltonian structures and the
classical Lagrange-d'Alembert principle}

It is evident that all evolution flows like (\ref{eq1.18}) or (\ref{eq2.21})
are Hamiltonian with respect to the second Lie--Poisson bracket (\ref{eq2.14})
on the adjoint loop space $\mathcal{\tilde{G}}^{\ast}=\widetilde{diff}^{\ast
}(\mathbb{T}^{n})$ or on the holomorphic subspace $\mathcal{\bar{G}}^{\ast
}=diff_{hol}^{\ast}(\ \mathbb{C\times T}^{n}),$ respectively. Moreover, they
are poly-Hamiltonian on the corresponding functional manifolds, as the related
\ bilinear forms \ (\ref{eq1.1}) and \ (\ref{eq2.11}) are \ marked by integers
$p\in\mathbb{Z}$. This leads to \cite{FaTa} \ an infinite hierarchy of
compatible Poisson structures on the phase spaces, isomorphic, respectively,
to the orbits of a chosen seed element $\tilde{l}\in\mathcal{\tilde{G}}^{\ast
}$ or of a seed element $\bar{l}\in\mathcal{\bar{G}}^{\ast}.$ Taking also into
account that all these Hamiltonian flows possess an infinite hierarchy of
commuting nontrivial conservation laws, one can prove their formal complete
integrability under some naturally formulated constraints. The corresponding
analytical expressions for the infinite hierarchy of conservation laws can be
retrieved from the asymptotic expansion \ (\ref{eq1.22}) for Casimir
functional gradients by employing the well-known \cite{BlPrSa,Olve,FaTa}
formal homotopy technique.

As an arbitrary heavenly type equation is a Hamiltonian system with respect to
both evolution parameters $t,y\in\mathbb{R}^{2},$ one can construct
\cite{Olve,PrMy,BlPrSa} its suitable Lagrangian representation under some
natural constraints. Thus, it is possible to retrieve the corresponding
Poisson structures related to both these evolution parameters $t,y\in
\mathbb{R}^{2},$ which, as follows from the Lie-algebraic analysis in Section
\ref{Sec_2}, are compatible with each other. In this way, one can show that
any heavenly\ type equation is a bi-Hamiltonian integrable system on the
corresponding functional manifold. It should be mentioned here that this
property was introduced by Sergyeyev in (\textit{arXiv:1501.01955}), published
in \cite{Serg}, and rediscovered and applied in detail in \cite{Shef} for
investigating the integrability properties of the general heavenly equation
(\ref{eq2.31}), first suggested by Schief in \cite{Schi} and later studied by
Doubrov and Ferapontov in \cite{DoFe}.

Using our approach in the case of the basic\ loop Lie algebra $\mathcal{\tilde
{G}}=\widetilde{diff}(\mathbb{T}^{n})$ one needs to recall that a seed
element\ $\tilde{l}(\lambda)\in\mathcal{\tilde{G}}^{\ast},\lambda\in
\mathbb{C},$ generates the commuting Hamiltonian flows
\begin{equation}
\partial\ \tilde{l}(\lambda)/\partial t_{j}:=-ad_{\nabla h_{+}^{(j)}%
(\ \tilde{l})}^{\ast}\ \tilde{l}(\lambda) \label{eq3.2}%
\end{equation}
for any $j\in\mathbb{Z}_{+}.$ \ Taking into account that the element
$\tilde{l}(\lambda)\in\mathcal{\tilde{G}}^{\ast},$ \ the hierarchy of flows
(\ref{eq3.2}) can be equivalently rewritten\ differential geometrically as the
generating vector field%
\begin{equation}%
\begin{array}
[c]{c}%
\frac{\partial}{\partial t}\tilde{l}(\lambda)(\tilde{Y})=-\frac{\mu}%
{\mu-\lambda}\tilde{l}(\lambda)([\nabla h(\ \tilde{l}(\mu)),\tilde{Y}])=\\
\\
=\ -i_{\frac{\mu}{\mu-\lambda}\nabla h(\ \tilde{l}(\mu))}d\tilde{l}%
(\lambda)(\tilde{Y})-d\ (\tilde{l}(\lambda)(\frac{\mu}{\mu-\lambda}\nabla
h(\tilde{l}(\mu)))(\tilde{Y})-\ \\
\\
-\frac{\mu}{\mu-\lambda}\ \left\langle \ d/dx,\nabla h(\ \tilde{l}%
(\mu))\right\rangle \ \ \tilde{l}(\lambda)(\tilde{Y}))
\end{array}
\label{eq3.3}%
\end{equation}
for any $\ \tilde{Y}\in\mathcal{\tilde{G}},$ where $\frac{\partial}{\partial
t}:=\sum_{j\in\mathbb{Z}_{+}}\mu^{-j}\frac{\partial}{\partial t_{j}}\ $\ and
$\ \mu\in\mathbb{C}\ $ is such that $\ |\lambda/\mu|<1\ $as $\lambda
,\mu\rightarrow\infty.$Recall now that for the space$\ \ \widetilde
{diff}(\mathbb{T}^{n})\simeq\tilde{\Gamma}(T(\mathbb{T}^{n}))$ \ and space
\ $\ \widetilde{diff}(\mathbb{T}^{n})^{\ast}\simeq\tilde{\Lambda}%
^{1}(\mathbb{T}^{n}),$ \ the generating relationship \ (\ref{eq3.3}) can be
easily rewritten as the following evolution equation
\begin{equation}
\frac{d}{dt}\tilde{l}(\lambda)=-\frac{\mu}{\mu-\lambda}\ \left\langle
(\partial/\partial x,\nabla h(\ l(\mu)))\right\rangle \ \ \tilde{l}%
(\lambda):=-\ \tilde{l}(\lambda)\ \operatorname{div}\tilde{K}(\mu)
\label{eq3.4}%
\end{equation}
\ on the seed element $\tilde{l}(\lambda)\in\tilde{\Lambda}^{1}(\mathbb{T}%
^{n}).$ Here $d/dt:=\partial/\partial t+L_{\tilde{K}(\mu)}$ $\ $and
$\ L_{\tilde{K}(\mu)}=i_{\tilde{K}(\mu)}d+di_{\tilde{K}(\mu)}$ denotes here
the well-known \cite{AbMa,BlPrSa,Godb} Cartan expression for the derivation
along the vector field
\begin{equation}
\tilde{K}(\mu):=\frac{\mu}{\mu-\lambda}\nabla h(\ \tilde{l}(\mu))=\frac{\mu
}{\mu-\lambda}<\nabla h(\ l(\mu)),\frac{\partial}{\partial x}>, \label{eq3.5}%
\end{equation}
which holds asymptotically as $\mu,\lambda\rightarrow\infty,|\lambda/\mu|<1,$
and is equivalent to the so called Lax--Sato hierarchy of equations, studied
in \cite{TaTa-1,TaTa-2, Bogd,BoKo,BoPa} for the generating vector fields
function \ (\ref{eq3.4}).\ We plan to study this and other related algebraic
aspects of these equations in more detail in a work under preparation.

The expression \ (\ref{eq3.4}) allows the following interesting mechanical
Lagrange--d'Alembert type principle \cite{AbMa} interpretation. Namely, the
seed differential form $\tilde{l}(\lambda)=<l(x;\lambda),\delta x>$
$\ \in\tilde{\Lambda}^{1}(\mathbb{T}^{n}),$ can be considered as a virtual
infinitesimal work, performed by the \textquotedblleft\textit{virtual
force}\textquotedblright\ $l(x;\lambda)\in\tilde{T}^{\ast}(\mathbb{T}^{n})$ at
point $x\in\mathbb{T}^{n}\ $\ for any $\lambda\in\mathbb{C}$ \ along the
infinitesimal path $\delta x\in\mathbb{T}^{n}.$ The whole infinitesimal
\textit{\textquotedblleft virtual work\textquotedblright} $\ \delta W(t),$
performed by this force within a moving 1-connected arbitrary open domain
$\Omega_{t}\subset\mathbb{T}^{n}$ with smooth boundary $\partial\Omega_{t},$
$t\in\mathbb{R},$ equals \
\begin{equation}
\delta W(t):=\int_{\Omega_{t}}<l(x(t);\lambda),\delta x(t)>d^{n}x(t),
\label{eq3.6}%
\end{equation}
where the evolution of points $x(t)\in\Omega_{t}$ is naturally determined by
the vector field (\ref{eq3.5})
\begin{equation}
\frac{dx(t)\ }{dt}=\frac{\mu}{\mu-\lambda}\nabla h(\ l(\mu))(t;x(t))
\label{eq3.7}%
\end{equation}
and the Cauchy data
\[
x(t)|_{t=0}=x_{0}\in\Omega_{0}%
\]
for an arbitrarily chosen open 1-connected domain $\Omega_{0}\subset
\mathbb{T}^{n}$ with the smooth boundary $\partial\Omega_{0}.$ Then the
Lagrange--d'Alembert principle in mechanics says that the infinitesimal
virtual work \ (\ref{eq3.7}) equals zero for all moments of time, \ that is
$\delta W(t)=0=$ $\delta W(0)$\ for all $t\in\mathbb{R}.\ $ To check that it
is really true, let us calculate the temporal derivative of the expression
\ (\ref{eq3.6}):%
\begin{equation}%
\begin{array}
[c]{c}%
\frac{d}{dt}\delta W(t)=\frac{d}{dt}\int_{\Omega_{t}}<l(x(t);\lambda),\delta
x(t)>d^{n}x(t)=\\
\\
=\frac{d}{dt}\int_{\Omega_{0}}<l(x(t);\lambda),\delta x(t)>|\frac
{\partial(x(t)}{\partial x_{0}}|d^{n}x_{0}=\int_{\Omega_{0}}\frac{d}%
{dt}(<l(x(t);\lambda),\delta x(t)>|\frac{\partial(x(t)}{\partial x_{0}}%
|)d^{n}x_{0}=\\
\\
=\int_{\Omega_{0}}[(\partial/\partial t+L_{\tilde{K}(\mu)})<l(x(t);\lambda
),\delta x(t)>+<l(x(t);\lambda),\delta x(t)>\operatorname{div}\tilde{K}%
(\mu)]|\frac{\partial(x(t)}{\partial x_{0}}|d^{n}x_{0}=\\
\\
=\int_{\Omega_{t}}[(\partial/\partial t+L_{\tilde{K}(\mu)})<l(x(t);\lambda
),\delta x(t)>+<l(x(t);\lambda),\delta x(t)>\operatorname{div}\tilde{K}%
(\mu)]d^{n}x(t)=\\
\\
=\int_{\Omega_{t}}[\left(  \partial/\partial t+L_{\tilde{K}(\mu)}%
+\operatorname{div}\tilde{K}(\mu)\right)  <l(x(t);\lambda),\delta
x(t)>]d^{n}x(t)=0,
\end{array}
\label{eq3.8}%
\end{equation}
owing to the equation \ (\ref{eq3.4}). Thus, if at $t=0$ one has $\delta
W(0)=0,$ the infinitesimal work $\delta W(t)=0$ for all $t\in\mathbb{R},$
proving the Lagrange--d'Alembert principle for the generating evolution
equation (\ref{eq3.4}).

\bigskip

\section{Acknowledgements}

The authors cordially thank Prof. M. B\l aszak, Prof. J. Cie\'{s}linski and
Prof. A. Sym for their cooperation and useful discussions of the results in
this paper during the Workshop \textquotedblleft Nonlinearity and
Geometry\textquotedblright\ held 20-23 January 2017 in Warsaw. A.P. is
especially indebted to Prof. M. Pavlov, Prof. W.K. Schief and Ya. G. Prytula
for mentioning important references, which were very helpful when preparing
the manuscript. He is also greatly indebted to Prof. V.E. Zakharov (University
of Arizona, Tucson) and Prof. J. Szmigelski (University of Saskatchewan,
Saskatoon) for their interest in the work and instructive discussions during
the XXXV Workshop on Geometric Methods in Physics, held 26.06-2.07.2016 in
Bia\l owie\.{z}a, Poland.


\begin{thebibliography}{99}                                                                                               %


\bibitem {AbMa}R. Abraham and J. Marsden, Foundations of Mechanics, Second
Edition, Benjamin Cummings, NY

\bibitem {AlSh}L.M. Alonso, A.B. Shabat, Hydrodynamic reductions and solutions
of a universal hierarchy, Theor. Math. Phys,104 (2004), 1073-1085

\bibitem {Arno}V.I. Arnold, Mathematical Methods of Classical Mechanics.,
Springer, NY, 1978

\bibitem {BlPrSa}D. Blackmore, A.K. Prykarpatsky and V.H. Samoylenko,
Nonlinear dynamical systems of mathematical physics, World Scientific
Publisher, NJ, USA, 2011

\bibitem {Blas}M. B\l aszak, Classical R-matrices on Poisson algebras and
related dispersionless systems, Phys. Lett. A 297(3-4) (2002) 191--195

\bibitem {BlSz}M. B\l aszak and B.M. Szablikowski, Classical R-matrix theory
of dispersionless systems: II. (2 + 1) dimension theory, J. Phys. A: Math.
Gen. 35 (2002) 10345

\bibitem {Bogd}L.V. Bogdanov, Interpolating differential reductions of
multidimensional integrable hierarchies, TMF, 2011, Volume 167, Number 3, 354--363

\bibitem {BoDrMa}L.V. Bogdanov, V.S. Dryuma, S.V. Manakov, Dunajski
generalization of the second heavenly equation: dressing method and the
hierarchy, J. Phys. A: Math. Theor. 40 (2007), 14383-14393

\bibitem {BoKo}L.V. Bogdanov, B.G. Konopelchenko, \ On the heavenly equation
and its reductions, J. Phys. A, Math. Gen. 39 (2006), 11793-11802

\bibitem {BoPa}L.V. Bogdanov, M.V. Pavlov, \ Linearly degenerate hierarchies
of quasiclassical SDYM type, arXiv:1603.00238v2 [nlin.SI] 15 Mar 2016

\bibitem {Buhl-1}M.A. Buhl, Surles operateurs differentieles permutables ou
non, Bull. des Sc.Math.,1928, S.2, t. LII, p. 353-361

\bibitem {Buhl-2}M.A. Buhl, Apercus modernes sur la theorie des groupes
continue et finis, Mem. des Sc. Math., fasc. XXXIII, Paris, 1928

\bibitem {Buhl-3}M.A. Buhl, Apercus modernes sur la theorie des groupes
continue et finis, Mem. des Sc. Math., fasc. XXXIII, Paris, 1928

\bibitem {Cart}A. Cartan, Differential Forms. Dover Publisher, USA, 1971

\bibitem {CoLe}Earl A. Coddington, Norman Levinson, Theory of ordinary
differential equations, International series in pure and applied mathematics,
McGraw-Hill, 1955

\bibitem {DoFe}B. Doubrov, E.V. Ferapontov, On the integrability of symplectic
Monge-Amp\`{e}re equations, J. Geom. Phys., 60 (2010), 10, 1604-1616

\bibitem {Duna}M. Dunajski, Anti-self-dual four-manifolds with a parallel real
spinor, Proc. Roy. Soc. A, 458 (2002), 1205

\bibitem {DuMaTo}M. Dunajski, L.J. Mason, P. Tod, Einstein--Weyl geometry, the
dKP equation and twistor theory, J. Geom. Phys. 37 (2001), no.1-2, 63-93

\bibitem {FeMoSo}Ferapontov E.V., Moro A., Sokolov V.V., Hamiltonian systems
of hydrodynamic type in 2 + 1 dimensions, Comm. Math. Phys. 285 (2009), 31-65, arXiv:0710.2012.

\bibitem {FeMo}E.V. Ferapontov and J. Moss, Linearly degenerate PDEs and
quadratic line complexes, arXiv:1204.2777v1 [math.DG] 12 Apr 2012

\bibitem {GeFu}I.M. Gelfand, D.B. Fuchs, Cohomology of the Lie algebra of
vector fields on the circle, Funct. Anal. Appl. 2 (1968), 4, 342-343

\bibitem {Godb}C. Godbillon, Geometrie Differentielle et Mecanique Analytique.
Hermann Publ., Paris, 1969

\bibitem {Golu}V.V. Golubev, "Vorlesungen \"{u}ber Differentialgleichungen im
Komplexen", Deutsch. Verlag Wissenschaft, Berlin, 1958)

\bibitem {Kono}B.G. Konopelchenko, Grassmanians \ $Gr(N-1,N+1),$ closed
differential $N-1$ forms and $N$-dimensional integrable systems.
arXiv:1208.6129v2 [nlin.SI] 5 Mar 2013

\bibitem {Kric}I.M. Krichever, The -function of the universal Whitham
hierarchy, matrix models and topological field theories, Comm. Pure Appl.
Math. 47 (1994), 437\{475, hep-th/9205110

\bibitem {KrMo}B. Kruglikov, O. Morozov, Integrable dispersionless PDE in 4D,
their symmetry pseudogroups and deformations arXiv:1410.7104v2 [math-ph] 11
Feb 2015

\bibitem {Kuli}P. P. Kulish, An analogue of the Korteweg--de Vries equation
for the superconformal algebra, Differential geometry, Lie groups and
mechanics. Part VIII, Zap. Nauchn. Sem. LOMI, 155, \textquotedblleft
Nauka\textquotedblright, Leningrad. Otdel., Leningrad, 1986, 142--149

\bibitem {MaSa}S.V. Manakov, P.M. Santini, On the solutions of the second
heavenly and Pavlov equations, J. Phys. A: Math. Theor. 42 (2009), 404013 (11pp)

\bibitem {MaSa-1}S.V. Manakov, P.M. Santini, Cauchy problem on the plane for
the dispersionless Kadomtsev- Petviashvili equation, JETP Lett. 83 (2006),
462-466, nlin.SI/0604023.

\bibitem {MaMe}M. Manas, E. Medina, L. Martinez-Alonso, On the Whitham
hierarchy: dressing scheme, string equations and additional symmetries, J.
Phys. A: Math. Gen. 39 (2006), 2349-2381, nlin.SI/0509017.

\bibitem {Moro}O.I. Morozov, A two-Component generalization of teh integrable
rd-Dym equation, SIGMA, 8 (2012), 051-056

\bibitem {Mikh}V. G. Mikhalev, On the Hamiltonian formalism for Korteweg--de
Vries type hierarchies, Funct. Anal. Appl., 26, Issue 2 (1992), 140--142

\bibitem {Misi}G. Misiolek, A shallow water equation as a geodesic flow on the
Bott--Virasoro group, J. Geom. Phys., 24 (1998), 3, 203-208.

\bibitem {Olve}P. Olver, Applications of Lie Groups to Differential Equations,
Second Edition, Springer-Verlag, New York, 1993

\bibitem {Ovsi-1}V. Ovsienko, Bi-Hamilton nature of the equation
$u_{tx}=u_{xy}u_{y}-u_{yy}u_{x},$ arXiv:0802.1818v1 [math-ph] 13 Feb 2008

\bibitem {Ovsi-2}V. Ovsienko, C. Roger, Looped Cotangent Virasoro Algebra and
Non-Linear Integrable Systems in Dimension 2 + 1, Commun. Math. Phys. 273
(2007), 357--378

\bibitem {Pavl}M.V. Pavlov, Integrable hydrodynamic chains, J. Math. Phys. 44
(2003), Issue 9, 4134-4156

\bibitem {Pavl-1}M.V. Pavlov, Integrable Dispersive Chains and Energy
Dependent Schr\"{o}dinger Operator, arXiv:1402.3836v2 [nlin.SI] 2 Mar 2014

\bibitem {Pavl-2}M.V. Pavlov, Classification of integrable Egorov hydrodynamic
chains, Theoret. and Math. Phys. 138 (2004), 45-58, nlin.SI/0603055

\bibitem {Pfei-1}G. Pfeiffer, Generalisation de la methode de Jacobi pour
l'integration des systems complets des equations lineaires et homogenes,
Comptes Rendues de l'Academie des Sciences de l'URSS, 1930, t. 190, p. 405-409

\bibitem {Pfei-2}M. G. Pfeiffer, Sur la operateurs d'un systeme complet
d'equations lineaires et homogenes aux derivees partielles du premier ordre
d'une fonction inconnue, Comptes Rendues de l'Academie des Sciences de l'URSS,
1930, t. 190, p. 909--911

\bibitem {Pfei-3}M. G. Pfeiffer, La generalization de methode de Jacobi-Mayer,
Comptes Rendues de l'Academie des Sciences de l'URSS, 1930, t. 191, p. 1107-1109

\bibitem {Pfei-4}M. G. Pfeiffer, Sur la permutation des solutions s'une
equation lineaire aux derivees partielles du premier ordre, Bull. des Sc.
Math., 1928, S.2,t.LII, p. 353-361

\bibitem {Pfei-5}M. G. Pfeiffer, Quelgues additions au probleme de M. Buhl,
Atti dei Congresso Internationale dei Matematici, Bologna, 1928, t.III, p. 45-46

\bibitem {Pfei-6}M. G. Pfeiffer, La construction des operateurs d'une equation
lineaire, homogene aux derivees partielles premier ordre, Journal du Cycle
Mathematique, Academie des Sciences d'Ukraine, Kyiv, N1, 1931, p. 37-72 (in Ukrainian)

\bibitem {Pleb}J.F. Pleba\'{n}ski, Some solutions of complex Einstein
equations, J. Math. Phys. 16 (1975), Issue 12, 2395-2402

\bibitem {Popo}C. Popovici, Sur les fonctions adjointes de M. Buhl, Comptes
Rendus, t.145 (1907), p. 749

\bibitem {PrMy}A.K. Prykarpatsky I.V. Mykytyuk, Algebraic integrability of
nonlinear dynamical systems on manifolds: classical and quantum aspects.
Kluwer Academic Publishers, the Netherlands, 1998

\bibitem {PrPr}Ya.A. Prykarpatsky and A.K. Prykarpatski, The integrable
heavenly type equations and their Lie-algebraic structure, arXiv:1785057
[nlin.SI] 24 Jan 2017

\bibitem {PrSa}Y.A. Prykarpatsky and A.M. Samoilenko, Algebraic - analytic
aspects of integrable nonlinear dynamical systems and their perturbations.
Kyiv, Inst. Mathematics Publisher, v. 41, 2002 (in Ukrainian)

\bibitem {ReSe}A.G. Reyman, M.A. Semenov-Tian-Shansky , Integrable Systems,
The Computer Research Institute Publ., Moscow-Izhvek, 2003 (in Russian)

\bibitem {Sato-1}M. Sato, Y. Sato, Soliton equations as dynamical systems on
infinite dimensional Grassmann manifold, in Nonlinear Partial Differential
Equations in Applied Science; Proceedings of the U.S.-Japan Seminar, Tokyo,
1982, Lect. Notes in Num. Anal. 5 (1982), 259--271.

\bibitem {Sato-2}M. Sato, and M. Noumi, Soliton equations and the universal
Grassmann manifolds, Sophia University Kokyuroku in Math. 18 (1984) (in Japanese).

\bibitem {Schi}W.K. Schief, Self-dual Einstein spaces via a permutability
theorem for the Tzitzeica equation, Phys. Lett. A, Volume 223, Issues 1--2, 25
November 1996, 55-62

\bibitem {Schi-1}W.K. Schief, Self-dual Einstein spaces and a discrete
Tzitzeica equation. A permutability theorem link, In Symmetries and
Integrability of Difference Equations, P. Clarkson and F. Nijhoff, eds, London
Mathematical Society, Lecture Note Series 255, Cambridge University Press
(1999) 137-148.

\bibitem {Seme}M. Semenov-Tian-Shansky, What is a classical R-matrix? Func.
Anal. Appl., 1983, 17(4), 259-272

\bibitem {SeSz}A. Sergyeyev and B. M. Szablikowski, Central extensions of
cotangent universal hierarchy: (2+1)-dimensional bi-Hamiltonian systems, Phys.
Lett. A, 372 (2008) 7016-7023

\bibitem {Serg}A. Sergyeyev, A simple construction of recursion operators for
multidimensional dispersionless integrable systems, J. Math. Anal. Appl. 454
(2017), no.2, 468--480.

\bibitem {ShMaYa}M.B. Sheftel, A.A. Malykh, D. Yaz\i c\i, Recursion operators
and bi-Hamiltonian structure of the general heavenly equation, J. Geom. Phys.
116 (2017) 124--139, arXiv:1510.03666.

\bibitem {Szab}B. Szablikowski, Hierarchies of Manakov--Santini Type by Means
of Rota-Baxter and Other Identities, SIGMA 12 (2016), 022, 14 pages

\bibitem {TaTa-1}K. Takasaki and T. Takebe, SDiff(2) Toda equation --
Hierarchy, Tau function, and symmetries, Letters in Mathematical Physics
23(3), 205--214 (1991)

\bibitem {TaTa-2}K. Takasaki and T. Takebe, Integrable Hierarchies and
Dispersionless Limit, Reviews in Mathematical Physics 7(05), 743--808 (1995)

\bibitem {FaTa}L.A. Takhtajan, L.D. Faddeev, Hamiltonian Approach in Soliton
Theory, Springer, Berlin-Heidelberg, 1987
\end{thebibliography}
\end{document}